\def\BibTeX{{\rm B\kern-.05em{\sc i\kern-.025em b}\kern-.08em
    T\kern-.1667em\lower.7ex\hbox{E}\kern-.125emX}}
\DeclareMathOperator*{\argmax}{arg\,max}
\DeclareMathOperator*{\argmin}{arg\,min}
\theoremstyle{remark}
\newtheorem{definition}{Definition}[]
\newtheorem{theorem}{Theorem}[]
\newtheorem{lemma}{Lemma}[]
\DeclarePairedDelimiter\floor{\lfloor}{\rfloor}
\newcommand\footnoteref[1]{\protected@xdef\@thefnmark{\ref{#1}}\@footnotemark}
\begin{document}
\setlength{\abovedisplayskip}{.01pt}
\setlength{\belowdisplayskip}{.01pt}
\title{{Interference Avoidance Position Planning in UAV-assisted Wireless Communication}\thanks{This work was supported in part by the National Science Foundation under grants ECCS-1444009 and CNS-1824518.}}

\author{\IEEEauthorblockN{Seyyedali Hosseinalipour}
\IEEEauthorblockA{\textit{ECE Department} \\
\textit{North Carolina State University}\\
Raleigh, USA \\
shossei3@ncsu.edu}
\and
\IEEEauthorblockN{Ali Rahmati}
\IEEEauthorblockA{\textit{ECE Department} \\
\textit{North Carolina State University}\\
Raleigh, USA \\
arahmat@ncsu.edu}
\and
\IEEEauthorblockN{Huaiyu Dai}
\IEEEauthorblockA{\textit{ECE Department} \\
\textit{North Carolina State University}\\
Raleigh, USA \\
hdai@ncsu.edu}}
\maketitle
\begin{abstract}
We consider unmanned aerial vehicle (UAV)-assisted wireless communication employing UAVs as relay nodes to increase the throughput between a pair of transmitter and receiver. We focus on developing effective methods to position the UAV(s) in the sky in the presence of a major source of interference, the existence of which makes the problem non-trivial. First, we consider utilizing a single UAV, for which we develop a theoretical framework to determine its optimal position aiming to maximize the SIR of the system. To this end, we investigate the problem for three practical scenarios, in which the position of the UAV is: (i) vertically fixed, horizontally adjustable; (ii) horizontally fixed, vertically  adjustable; (iii) both horizontally and vertically adjustable. Afterward, we consider employing multiple UAVs, for which we propose a cost-effective method that simultaneously minimizes the number of required UAVs and determines their optimal positions so as to guarantee a certain SIR of the system. We further develop a distributed placement algorithm, which can increase the SIR of the system given an arbitrary number of UAVs. Numerical simulations are provided to evaluate the performance of our proposed methods. 
\end{abstract}

\section{Introduction}\label{sec:intro}

\noindent Recently, unmanned aerial vehicles (UAVs) have  been considered  as
a promising solution  for a variety of critical applications
such as environmental surveillance, public safety, disaster
relief, search and rescue, and purchase delivery~\cite{hayat2016survey}. 
Constructing a UAV communication network for
such  applications is a non-trivial task since there is no
regulatory and pre-allocated spectrum
band for the UAVs. As a result, this network usually coexists with other communication networks, e.g., cellular networks~\cite{saleem2015integration}.
Thus, studying the problem of interference avoidance for the UAV communication network is critical, where the inherent mobility feature of the UAVs can be deployed as an interference evasion mechanism.

In the UAV literature, there are several works dedicated to placement planning for UAVs (e.g.,~\cite{ladosz2016optimal,chen2018optimum,jiang2018power,zhang2018trajectory,li2018placement,chen2018multiple}).  In~\cite{ladosz2016optimal,chen2018optimum,jiang2018power}, UAVs are utilized in the two-hop relay communication framework, where each UAV directly connects the transmitter/source to the receiver/sink. Each of these works aims to identify the optimal position of UAVs to improve a desired metric of interest, namely the network
connectivity, the reliability, and the throughput of the system. Studying the UAV placement planning in the multi-hop relay communication context, in which multiple UAVs can be utilized between the transmitter and the receiver, is a topic  studied recently in~\cite{zhang2018trajectory,li2018placement,chen2018multiple}. The aim of these works is similar to~\cite{ladosz2016optimal,chen2018optimum,jiang2018power}; however, the existence of communication links among the UAVs makes their methodology different. Nevertheless, none of the aforementioned works consider the placement of UAV(s) in the presence of a source of interference, which is the main motivation for our work. Also, there are some similar works in the literature of sensor networks, among which the most relevant ones are~\cite{roh2010optimal, {chattopadhyay2017deploy}}. In \cite{roh2010optimal}, the two-dimensional (2-D) placement of relays is investigated aiming to increase the achievable transmission rate. In \cite{chattopadhyay2017deploy}, the \textit{impromptu} (as-you-go) placement of the relay nodes between a pair of source and sink node is addressed considering that the distance between those nodes may be a random variable, where the space is restricted to be one-dimensional (1-D).  Nonetheless, none of these works consider a source of interference in the network. 

 Our goal is to go one step further and investigate the UAV-assisted wireless communication paradigm in the presence of a major source of interference (MSI), which refers to the source of interference with the dominant effect in the environment. Our work contributes to the literature by addressing the relay position planning problem in the presence of an MSI in 3-D space. Considering different interpretations for the MSI, e.g., a primary transmitter in UAV cognitive radio networks \cite{saleem2015integration,rahmati2019dynamic}, an eNodeB in UAV-assisted LTE-U/WiFi public safety networks \cite{athukoralage2016regret}, a malicious user in drone delivery application, or a base station in surveillance application, our paper can be adapted to multiple real-world scenarios. We address the UAV placement planning in the presence of an MSI for both two-hop and multi-hop relay communication settings. Our contributions can be summarized as follows: (i) Considering a single UAV, we develop a theoretical approach to identify the optimal position of the UAV so as to maximize the signal-to-interference ratio (SIR) of the system. For this case, we address the design under three circumstances, in which the position of the UAV is either horizontally or vertically fixed, or neither. (ii) For a predetermined SIR of the system, we develop a theoretical framework, which simultaneously determines the minimum required number of UAVs and their optimal positions. (iii) For a predetermined number of UAVs, we propose a distributed algorithm to maximize the SIR of the system, which only requires message exchange among the adjacent UAVs.
\section{Preliminaries}
\vspace{-1mm}
\noindent We consider data transmission between a pair of transmitter (Tx) and receiver (Rx) co-existing with a major source of interference (MSI). We consider a \textit{left-handed coordination system} $(x,y,h)$, where the Tx, the Rx, and the MSI are assumed to be on the ground plane defined as $h=0$. The location of the Tx, the Rx, and the MSI is assumed to be $(0,0,0)$, $(D,0,0)$, and $(X_{_{\textrm{MSI}}},Y_{_{\textrm{MSI}}},0)$, respectively. We assume $0\leq X_{_{\textrm{MSI}}}\leq D$ for simplicity, which can be readily generalized with minor modification. The transmission powers of the Tx, the UAV, and the MSI are denoted by $p_t$, $p_u$, and $p_{_{\textrm{MSI}}}$, respectively. To improve the transmission data rate, it is desired to place a UAV or a set of UAVs, each of which acting as a relay, between the Tx and the Rx. To have tractable derivations, we assume that the UAVs can be placed at $y=0$ plane. Also, considering legal regulations, we confine the altitude of the UAVs to $h\in[h_{min},h_{max}]$. 
		 
 We consider the line-of-sight (LoS) and the non-line-of-sight (NLoS) channel models, for which the path-loss is given by:
\begin{equation}
    L^{\textrm{LoS}}_{i,j}= \mu_{_{\textrm{LoS}}} d_{i,j}^{\alpha},\;\;L^{\textrm{NLoS}}_{i,j}=\mu_{_{\textrm{NLoS}}} d_{i,j}^{\alpha},
\end{equation}
where $\mu_{_{\textrm{LoS}}}\triangleq C_{_{\textrm{LoS}}}\left(4\pi f_c/c\right)^\alpha$, $\mu_{_{\textrm{NLoS}}}\triangleq C_{_{\textrm{NLoS}}}\left(4\pi f_c/c\right)^\alpha$, $C_{_{\textrm{LoS}}}$ ($C_{_{\textrm{NLoS}}}$) is the excessive path loss factor incurred by shadowing, scattering, etc., in LoS (NLoS) link, $f_c$ is the carrier frequency, $c$ is the speed of light, $\alpha=2$ is the path-loss exponent, and $d_{i,j}$ is the Euclidean distance between node $i$ and node $j$. The link between two UAVs (air-to-air) is modeled using the LoS model, while the link between the MSI and the Rx (ground-to-ground) is modeled based on the NLoS model. To model the link between a UAV and the Rx/Tx/MSI (air-to-ground and ground-to-air) either the LoS or the NLoS model~\cite{chen2018multiple,zhang2018trajectory} or a weighted average between the LoS model and the NLoS model~\cite{moza:internet,channel2,channel3} can be used. In this paper, we consider a general case and denote the path loss between a UAV $i$ and node $j$ located on the ground by $\eta_{_{\textrm{NLoS}}}d_{ij}^{2}$. We assume that $\eta_{_{\textrm{NLoS}}}$ is constant in the range $h\in[h_{min},h_{max}]$, and thus $\eta_{_{\textrm{NLoS}}}\triangleq g(\mu_{_{\textrm{LoS}}},\mu_{_{\textrm{NLoS}}},h_{min},h_{max})$, where $g$ is a function. Further discussions on obtaining the $g$ in different environments can be found in~\cite{moza:internet,channel2,channel3}. Due to the geographical limitations, direct communication between the Tx and Rx is not considered. 
\vspace{-1mm}
\section{Position Planning for a Single UAV}\label{sec:singleUAV}
\vspace{-0.5mm}
\noindent Let $\textrm{SIR}_1(x,h)$, $\textrm{SIR}_2(x,h)$ denote the SIR at the UAV located at $(x,0,h)$ and the SIR at the Rx, respectively (see Fig.~\ref{fig:single}), which are given by:
\begin{equation*}
\hspace{-6mm}
\resizebox{0.9\hsize}{!}{$
      \textrm{SIR}_1(x,h)\hspace{-1mm}=\hspace{-1mm}\frac{p_t\hspace{-.14mm}/\hspace{-.15mm}(\hspace{-.15mm}\eta_{_{\textrm{NLoS}}} d_{_{\textrm{UAV},\textrm{Tx}}}^{2}\hspace{-.15mm})}{p_{_{\textrm{MSI}}}\hspace{-.15mm}/\hspace{-.15mm}(\hspace{-.15mm}\eta_{_{\textrm{NLoS}}} d_{_{\textrm{UAV},\textrm{MSI}}}^{2}\hspace{-.15mm})}\hspace{-1mm}=\hspace{-1mm}\frac{p_t \hspace{-.12mm} \left(\hspace{-.15mm}\left(\hspace{-.15mm}x-X_{_{\textrm{MSI}}}\hspace{-.15mm}\right)^2\hspace{-.15mm}+\hspace{-.14mm}Y_{_{\textrm{MSI}}}^2\hspace{-.15mm}+\hspace{-.12mm}h^2\hspace{-.15mm}\right)}{p_{_{\textrm{MSI}}}\left(x^2+h^2\right)},$
      }
\end{equation*}
\begin{equation}
\hspace{1.0mm}
\resizebox{0.9\hsize}{!}{$
      \textrm{SIR}_2\hspace{-.1mm}(\hspace{-.15mm}x,h\hspace{-.15mm})\hspace{-1mm}= \hspace{-.199mm}\frac{\hspace{-.19mm}p_u\hspace{-.15mm}/\hspace{-.15mm}(\hspace{-.15mm}\eta_{_{\textrm{NLoS}}} \hspace{-.15mm}d_{_{\textrm{UAV},\textrm{Rx}}}^{2}\hspace{-.15mm})}{\hspace{-0.17mm}p_{_{\textrm{MSI}}}\hspace{-.15mm}/\hspace{-.15mm}(\hspace{-.15mm}\mu_{_{\textrm{NLoS}}} \hspace{-.15mm}d_{_{\textrm{Rx},\textrm{MSI}}}^{2}\hspace{-.15mm})}\hspace{-1mm}=
      \hspace{-1mm} \frac{\hspace{-.15mm}p_u\hspace{-.1mm} \left(\hspace{-.15mm}Y^2_{_{\textrm{MSI}}}\hspace{-.15mm}+\hspace{-.0mm}(\hspace{-.13mm}D-X_{_{\textrm{MSI}}}\hspace{-.13mm})^2\hspace{-.13mm}\right)}{\hspace{-0.1mm}p_{_{\textrm{MSI}}}\hspace{-0.15mm}\left(\left(\hspace{-0.14mm}D\hspace{-0.12mm}-\hspace{-0.12mm}x\hspace{-0.13mm}\right)\hspace{-.12mm}^2\hspace{-0.12mm}+\hspace{-0.2mm}h\hspace{-.2mm}^2\right)\hspace{-0.0mm} \left(\frac{\hspace{-0.15mm}\eta_{_{\textrm{NLoS}}}}{\hspace{-0.15mm}\mu_{_{\textrm{NLoS}}}}\hspace{-.17mm}\right)}.$
      }
\end{equation}
Considering the conventional \textit{decode-and-forward} relay mode, the SIR of the system $\textrm{SIR}_S$ is given by:
 \begin{equation}\label{eq:SIR_s}
     \textrm{SIR}_S(x,h)=\min \big\{\textrm{SIR}_1(x,h),\textrm{SIR}_2(x,h)\big\} \;\;\forall x,h.
\end{equation}
Assuming equal bandwidths for both links, maximizing the data rate between the Tx and the Rx is equivalent to maximizing the $\textrm{SIR}_S$ by tuning the location of the UAV described as:
\begin{equation}
    (x^*,h^*) = \argmax_{x\in [0,D], h\in [h_{min},h_{max}]} \textrm{SIR}_S(x,h).
\end{equation}
The presence of an MSI renders our approach different from most of the works mentioned in Section~\ref{sec:intro} mainly due to its impact on the SIR expressions making them non-convex with respect to (w.r.t) the position of the UAV(s), which leads to the inapplicability of the conventional optimization techniques. In this work, we exploit \textit{geometry} and \textit{functional analysis} to obtain the subsequent derivations. In the following, we propose two lemmas, which are later used to derive the main results.
\begin{figure}[t]
\includegraphics[width=8.9cm,height=2.3cm]{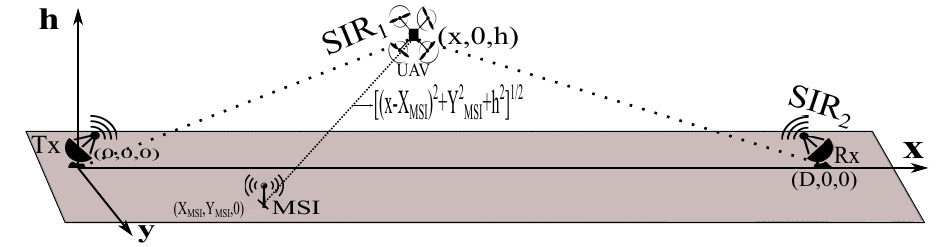}
		\caption{A single UAV acting as a relay between a pair of Tx and Rx coexisting with an MSI.}
		 \label{fig:single}
		 \vspace{-1.2mm}
		 \end{figure}
\begin{definition}
In geometry, a \textit{locus} is the set of all points satisfying the same conditions or possessing the same properties.
\end{definition}
\begin{lemma}\label{th:main}
The locus of the points satisfying $\textrm{SIR}_1(x,h)=\textrm{SIR}_2(x,h)$ is given by the following expression\footnote{In this work, the $+$ and $-$ superscripts always denote the larger and the smaller solution, respectively.}:
\begin{equation}\label{eq:locus}
    {h^{\pm}}=\sqrt{\Lambda^{\pm}(x)},  
\end{equation}
with $\Lambda^{\pm}(x)\triangleq \big[-B(x)\pm \sqrt{B^2(x)-4A(x)C(x)}\big]/\left(2A(x)\right)$, where $A(x)$, $B(x)$, and $C(x)$ are given by~\eqref{eq:lambda}.
\begin{table*}[t]
\begin{minipage}{0.99\textwidth}
\begin{equation}\label{eq:lambda}
\begin{aligned}
     &A(x)=p_t, B(x)= p_t\left(X_{_{\textrm{MSI}}}-x\right)^2+p_t(D-x)^2-p_u       \left(\frac{\mu_{_{\textrm{NLoS}}}}{\eta_{_{\textrm{NLoS}}}}\right)\left(D-X_{_{\textrm{MSI}}}\right)^2+Y_{_{\textrm{MSI}}}^2\left(p_t-p_u       \left(\frac{\mu_{_{\textrm{NLoS}}}}{\eta_{_{\textrm{NLoS}}}}\right)\right), \\
     &C(x)= p_t\big[(D-x)^2\left(\left(X_{_{\textrm{MSI}}}-x\right)^2+Y_{_{\textrm{MSI}}}^2\right) \big]- p_u       \left(\frac{\mu_{_{\textrm{NLoS}}}}{\eta_{_{\textrm{NLoS}}}}\right)\Big[x^2\left(Y_{_{\textrm{MSI}}}^2+ (D-X_{_{\textrm{MSI}}})^2\right) \Big]
     \end{aligned}
\end{equation}
\begin{equation}\label{eq:quartic}
\resizebox{0.95\hsize}{!}{$
p_t\left(x-X_{_{\textrm{MSI}}}\right)^2\left((D-x)^2+\hat{h}^2\right)+p_t( Y_{_{\textrm{MSI}}}^2+\hat{h}^2)(D-x)^2-p_u\left(\frac{\mu_{_{\textrm{NLoS}}}}{\eta_{_{\textrm{NLoS}}}}\right) x^2(Y_{_{\textrm{MSI}}}^2+(D-X_{_{\textrm{MSI}}})^2)+p_t \hat{h}^2(Y_{_{\textrm{MSI}}}^2+\hat{h}^2)-p_u\left(\frac{\mu_{_{\textrm{NLoS}}}}{\eta_{_{\textrm{NLoS}}}}\right) \hat{h}^2(Y_{_{\textrm{MSI}}}^2+(D-X_{_{\textrm{MSI}}})^2)=0$
    }
\end{equation}
\vspace{-10mm}
\end{minipage}
\end{table*}
\end{lemma}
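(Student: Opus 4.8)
The plan is to reduce the defining relation $\textrm{SIR}_1(x,h)=\textrm{SIR}_2(x,h)$ to a single polynomial equation in the unknown $h^2$ and then to solve that equation in closed form. Since every step of the argument is an equivalence, the resulting formula will describe the locus exactly (both inclusions at once), so the lemma is ultimately a computation rather than a conceptual statement.

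First I would substitute the two explicit ratio expressions for $\textrm{SIR}_1(x,h)$ and $\textrm{SIR}_2(x,h)$ into the equality. The common factor $p_{_{\textrm{MSI}}}>0$ cancels, and because the admissible altitudes satisfy $h\ge h_{min}>0$ the denominators $x^2+h^2$ and $(D-x)^2+h^2$ are strictly positive over the whole feasible region; hence clearing denominators by cross-multiplication neither creates nor destroys solutions. Carrying this out and collecting all terms on one side produces exactly~\eqref{eq:quartic}, with $\hat h$ in the role of $h$.

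Next I would use the fact that $h$ enters~\eqref{eq:quartic} only through $h^2$. Treating $h^2$ as a single variable, the highest power present is $p_t h^4$, so the equation is a \emph{quadratic} in $h^2$. Gathering the coefficients of $(h^2)^2$, of $h^2$, and of the $h^2$-free part and simplifying yields $A(x)=p_t$ together with the expressions $B(x)$ and $C(x)$ displayed in~\eqref{eq:lambda}. This expansion is the one place where care is needed: the $Y_{_{\textrm{MSI}}}^2$ contributions and the $p_u(\mu_{_{\textrm{NLoS}}}/\eta_{_{\textrm{NLoS}}})$ terms must be distributed correctly between $B(x)$ and $C(x)$, but it is otherwise routine bookkeeping.

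Finally, since $A(x)=p_t>0$ for every $x$ the quadratic is never degenerate, so for each abscissa $x$ it has exactly the two roots $\Lambda^{\pm}(x)=[-B(x)\pm\sqrt{B^2(x)-4A(x)C(x)}]/(2A(x))$ in $h^2$, with $\Lambda^{+}(x)\ge\Lambda^{-}(x)$ in agreement with the $\pm$ convention of the footnote. Because a physical altitude is nonnegative, each root satisfying $\Lambda^{\pm}(x)\ge0$ corresponds to the unique altitude $h^{\pm}=\sqrt{\Lambda^{\pm}(x)}$, which is~\eqref{eq:locus}; a negative discriminant, or a root with $\Lambda^{\pm}(x)<0$, simply indicates that no point of the locus lies above that value of $x$. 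I do not expect any genuine obstacle beyond the coefficient-collection step and the domain remark that legitimizes the cross-multiplication.
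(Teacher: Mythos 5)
Your proposal is correct and is precisely the ``algebraic manipulation'' the paper alludes to but omits: cross-multiplying the two SIR ratios (legitimate since the denominators are positive on the feasible region), collecting the result as a quadratic in $h^2$ with the coefficients $A(x)$, $B(x)$, $C(x)$ of~\eqref{eq:lambda}, and applying the quadratic formula before taking square roots. I verified the coefficient collection and it reproduces both \eqref{eq:quartic} and \eqref{eq:lambda} exactly, so nothing is missing.
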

\begin{proof}
The proof can be carried out using algebraic manipulations, which is omitted due to the limited space.
\end{proof}
\begin{lemma}\label{lemma:lem1}
For $\textrm{SIR}_1$, the \textit{stationary points}~\cite{ref:stat} with respect to $x$, $\Psi^x$, is given by: 
\begin{equation}
    \begin{aligned}
  &\Psi^x= \frac{Y_{_{\textrm{MSI}}}^2+X_{_{\textrm{MSI}}}^2+\sqrt{(Y_{_{\textrm{MSI}}}^2+X_{_{\textrm{MSI}}}^2)^2+4X_{_{\textrm{MSI}}}^2h^2}}{2X_{_{\textrm{MSI}}}}.
\end{aligned}
\end{equation}
Also, $\textrm{SIR}_1$ has no stationary point with respect to $h$ when $h\in (h_{min},h_{max})$. With $\Psi^h\triangleq \frac{Y_{_{\textrm{MSI}}}^2+X_{_{\textrm{MSI}}}^2}{2X_{_{\textrm{MSI}}}}$, we have
\begin{equation}
   \hspace{-.1mm} \begin{cases}
          \frac{\partial \textrm{SIR}_1(x,h)}{\partial x}\geq0 \;\textrm{if}\; x\geq \Psi^x,\\
          
          \frac{\partial \textrm{SIR}_1(x,h)}{\partial x}<0 \;\; \textrm{O.W.},
    \end{cases}
    \begin{cases}
          \frac{\partial \textrm{SIR}_1(x,h)}{\partial h}\geq0\;\textrm{if}\;x\geq \Psi^h,\\
          \frac{\partial \textrm{SIR}_1(x,h)}{\partial h}<0\;\; \textrm{O.W}.
    \end{cases}
\end{equation}
Moreover,
\begin{equation}
   \hspace{-9mm}\resizebox{0.99\hsize}{!}{ $\displaystyle\max_{\hspace{9mm}x\in [0,D], h\in [h_{min},h_{max}]} \hspace{-6mm} \textrm{SIR}_1(x,h)=\frac{p_t(X_{_{\textrm{MSI}}}^2+Y_{_{\textrm{MSI}}}^2+h_{min}^2)}{p_{_{\textrm{MSI}}} h_{min}^2}.$}
\end{equation}
On the other hand, $\textrm{SIR}_2$ has no stationary points when $x\in (0,D), h\in (h_{min},h_{max})$ and
\begin{align}
         & \hspace{-.5mm}\frac{\partial \textrm{SIR}_2(x,h)}{\partial h}\leq0, \;\;
          \frac{\partial \textrm{SIR}_2(x,h)}{\partial x}\geq 0,\nonumber\\
          &\forall x\in [0,D], h\in [h_{min},h_{max}],
          \end{align}
          and
          \begin{equation}
    \hspace{-10mm}\max_{\hspace{10mm} x\in [0,D], h\in [h_{min},h_{max}]}\hspace{-14mm} \textrm{SIR}_2(x,h) =\hspace{-0.5mm}\frac{p_u\mu_{_{\textrm{NLoS}}}(Y_{_{\textrm{MSI}}}^2+(D-X_{_{\textrm{MSI}}})^2)}{p_{_{\textrm{MSI}}}\eta_{_{\textrm{NLoS}}}h_{min}^2}.
\end{equation}
\end{lemma}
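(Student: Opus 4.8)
The plan is to handle $\textrm{SIR}_1$ and $\textrm{SIR}_2$ separately, in each case differentiating the rational objective, clearing the (strictly positive, thanks to $h\ge h_{min}>0$) denominator, and reading off the sign of the resulting polynomial numerator. For $\textrm{SIR}_1$, fixing $h$ and differentiating in $x$, the numerator of $\partial\textrm{SIR}_1/\partial x$ reduces, after cancelling $(x^2+h^2)^2$, to a positive multiple of the quadratic $q(x)=X_{_{\textrm{MSI}}}x^2-(X_{_{\textrm{MSI}}}^2+Y_{_{\textrm{MSI}}}^2)x-X_{_{\textrm{MSI}}}h^2$. Its roots have product $-h^2<0$, so exactly one is nonnegative; this root is precisely $\Psi^x$, and since $q$ opens upward we get $q(x)<0$ on $[0,\Psi^x)$ and $q(x)>0$ on $(\Psi^x,\infty)$, which is the claimed sign dichotomy and shows $\Psi^x$ is the only stationary point in $[0,D]$ when it falls there. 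Fixing $x$ and differentiating in $h$, the numerator of $\partial\textrm{SIR}_1/\partial h$ is a positive multiple of $h\big(2X_{_{\textrm{MSI}}}x-X_{_{\textrm{MSI}}}^2-Y_{_{\textrm{MSI}}}^2\big)$; for $h>0$ it vanishes only when $x=\Psi^h$ (in which case $\textrm{SIR}_1$ does not depend on $h$), so there is no interior stationary point in $h$, and the sign of the bracket gives the $\Psi^h$ dichotomy. The analogous and simpler computation for $\textrm{SIR}_2$ gives $\partial\textrm{SIR}_2/\partial h\le0$ and $\partial\textrm{SIR}_2/\partial x\ge0$ on the box, with equality only on its boundary ($h=0\notin[h_{min},h_{max}]$ or $x=D$), hence no interior stationary point.

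For the maximum of $\textrm{SIR}_1$ over $x\in[0,D]$, $h\in[h_{min},h_{max}]$, the $x$-monotonicity just established shows that for each fixed $h$ the maximum over $x$ is attained at $x=0$ or $x=D$, so it suffices to bound $\textrm{SIR}_1(0,h)$ and $\textrm{SIR}_1(D,h)$. Writing $\textrm{SIR}_1(0,h)=\frac{p_t}{p_{_{\textrm{MSI}}}}\big(1+(X_{_{\textrm{MSI}}}^2+Y_{_{\textrm{MSI}}}^2)/h^2\big)$ exhibits it as decreasing in $h$, hence at most $\textrm{SIR}_1(0,h_{min})$, the asserted value. For $x=D$, the hypothesis $0\le X_{_{\textrm{MSI}}}\le D$ gives $(D-X_{_{\textrm{MSI}}})^2\le D^2$, so $\textrm{SIR}_1(D,h)\le\frac{p_t}{p_{_{\textrm{MSI}}}}\big(1+Y_{_{\textrm{MSI}}}^2/(D^2+h^2)\big)\le\frac{p_t}{p_{_{\textrm{MSI}}}}\big(1+Y_{_{\textrm{MSI}}}^2/h_{min}^2\big)\le\textrm{SIR}_1(0,h_{min})$, using $h\ge h_{min}$. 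Combining the two cases yields the stated maximum of $\textrm{SIR}_1$.

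For $\textrm{SIR}_2$, its numerator is independent of $(x,h)$ and it is a strictly decreasing function of $(D-x)^2+h^2$, so its maximum over the box is obtained by minimizing $(D-x)^2+h^2$ there. Over $x\in[0,D]$ this is minimized at $x=D$ and over $h\in[h_{min},h_{max}]$ at $h=h_{min}$, giving minimal value $h_{min}^2$ at $(x,h)=(D,h_{min})$; substituting back produces the claimed expression $\frac{p_u\mu_{_{\textrm{NLoS}}}(Y_{_{\textrm{MSI}}}^2+(D-X_{_{\textrm{MSI}}})^2)}{p_{_{\textrm{MSI}}}\eta_{_{\textrm{NLoS}}}h_{min}^2}$.

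I expect the only non-mechanical step to be the global maximum of $\textrm{SIR}_1$: since $\textrm{SIR}_1$ is non-monotone in $x$ (it decreases up to $\Psi^x$ and then increases), one cannot read the maximum off a single corner, and the comparison of the two surviving candidates $x=0$ and $x=D$ must use both $X_{_{\textrm{MSI}}}\le D$ and $h\ge h_{min}$. All remaining assertions are routine sign computations on polynomial numerators once the derivatives are simplified.
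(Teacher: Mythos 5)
Your proof is correct, and in fact the paper states Lemma~\ref{lemma:lem1} without any proof at all, so your derivation supplies exactly the routine calculus the authors leave implicit: differentiating each SIR, clearing the positive denominators, and reading signs off the quadratic $X_{_{\textrm{MSI}}}x^2-(X_{_{\textrm{MSI}}}^2+Y_{_{\textrm{MSI}}}^2)x-X_{_{\textrm{MSI}}}h^2$ (for $x$) and the factor $2X_{_{\textrm{MSI}}}x-X_{_{\textrm{MSI}}}^2-Y_{_{\textrm{MSI}}}^2$ (for $h$) is precisely what is needed, and your endpoint comparison for $\max\textrm{SIR}_1$ correctly uses both $X_{_{\textrm{MSI}}}\le D$ and $h\ge h_{min}$. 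The only caveat, which is a defect of the lemma's statement rather than of your argument, is the degenerate case $X_{_{\textrm{MSI}}}=0$ (permitted by the paper's assumption $0\le X_{_{\textrm{MSI}}}\le D$), where $\Psi^x$ and $\Psi^h$ are undefined and your quadratic collapses to $-Y_{_{\textrm{MSI}}}^2x$, so $\textrm{SIR}_1$ is simply nonincreasing in $x$ and independent of... rather, nonincreasing in $h$ as well; a one-line remark covering this case would make the proof airtight.
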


In practice, one of the following scenarios may occur: (i) The UAV position is vertically fixed and horizontally adjustable. This may arise in urban applications, in which there is a desired altitude for the UAVs to avoid collision with other flying objects. (ii) The UAV position is horizontally fixed and vertically adjustable.  This happens specially in the surveillance and information gathering applications, in which the position of the UAV is fixed in the desired horizontal position and only the altitude can be tuned. (iii) The UAV position is neither vertically nor horizontally fixed, which is practical in non-urban areas with a few flying objects. In the following, we tackle these scenarios in order. 
Henceforth, whenever we refer to the roots of an equation or the points in the locus, the feasible space is confined to $x\in[0,D]$ and $h\in[h_{min},h_{max}]$.

\subsubsection{Finding the optimal horizontal position $x^*$ of the UAV for a given altitude $h=\hat{h}$}
In this case, we analyze the result of Lemma~\ref{th:main} using Lemma~\ref{lemma:lem1} in the following Theorem. 
\begin{theorem}\label{cor:1}
Given a fixed altitude $h=\hat{h}$, the horizontal positions satisfying \eqref{eq:locus} can be obtained by solving the \textit{quartic} equation given in~\eqref{eq:quartic}, where the characteristic of this equation considering $x\in [0,D]$ and the corresponding optimal horizontal position of the UAV $x^*$ are described as follows:

\textbf{Case 1)} $\textrm{SIR}_1(0,\hat{h})< \textrm{SIR}_2(0,\hat{h})$: The quartic equation has no solution and $x^*=0$.

\textbf{Case 2)} $\textrm{SIR}_1(0,\hat{h})\geq \textrm{SIR}_2(0,\hat{h})$ and $\Psi^x\geq D$ and $\textrm{SIR}_2(D,\hat{h}) \geq \textrm{SIR}_1(D,\hat{h})$: The quartic equation has one solution $x_{sol}$, which can be numerically obtained and $x^*=x_{sol}$.
   
\textbf{Case 3)} $\textrm{SIR}_1(0,\hat{h})\geq \textrm{SIR}_2(0,\hat{h})$ and $\Psi^x\geq D$ and $\textrm{SIR}_2(D,\hat{h}) < \textrm{SIR}_1(D,\hat{h})$: The quartic equation has no solution and  $x^*=D$.
    
\textbf{Case 4)} $\textrm{SIR}_1(0,\hat{h})\geq \textrm{SIR}_2(0,\hat{h})$ and $\Psi^x<D$ and $\textrm{SIR}_1(\Psi^x,\hat{h})\leq \textrm{SIR}_2(\Psi^x,\hat{h})$: The quartic equation has at least a feasible solution. Let $x_{sol}$ denote the smallest solution, if $\textrm{SIR}_1(x_{sol},\hat{h})\geq \textrm{SIR}_1(D,\hat{h})$ then $x^*=x_{sol}$; otherwise, $x^*=D$.

      \textbf{Case 5)} $\textrm{SIR}_1(0,\hat{h})\geq \textrm{SIR}_2(0,\hat{h})$ and $\Psi^x<D$ and $\textrm{SIR}_1(\Psi^x,\hat{h})> \textrm{SIR}_2(\Psi^x,\hat{h})$: The quartic equation may or may not have a feasible solution and $x^*=D$.

\end{theorem}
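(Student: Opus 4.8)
The plan is to exploit the monotonicity structure of $\textrm{SIR}_1$ and $\textrm{SIR}_2$ established in Lemma~\ref{lemma:lem1} together with the definition $\textrm{SIR}_S=\min\{\textrm{SIR}_1,\textrm{SIR}_2\}$. Fix $h=\hat h$. From Lemma~\ref{lemma:lem1}, $\textrm{SIR}_2(\cdot,\hat h)$ is nondecreasing in $x$ on $[0,D]$, while $\textrm{SIR}_1(\cdot,\hat h)$ is strictly decreasing on $[0,\Psi^x]$ and nondecreasing on $[\Psi^x,D]$ (reading off the sign of $\partial\textrm{SIR}_1/\partial x$ around the unique stationary point $\Psi^x$). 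The first step is to derive the quartic~\eqref{eq:quartic}: substitute $h=\hat h$ into the equation $\textrm{SIR}_1(x,\hat h)=\textrm{SIR}_2(x,\hat h)$, clear denominators (both $x^2+\hat h^2$ and $(D-x)^2+\hat h^2$ are positive on the feasible range), and collect terms; since $\textrm{SIR}_1$ contributes a factor quadratic in $x$ over a quadratic in $x$ and likewise for $\textrm{SIR}_2$, cross-multiplying yields a degree-$4$ polynomial in $x$, which is exactly the left-hand side of~\eqref{eq:quartic}. Equivalently, these are the points of the locus of Lemma~\ref{th:main} with $h^{\pm}=\hat h$, i.e. the $x$ for which $\hat h^2=\Lambda^{\pm}(x)$; this gives the alternative route via $A(x)\hat h^4+B(x)\hat h^2+C(x)=0$ read as a polynomial identity in $x$.

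The core argument is then a case split on where the two curves sit at the endpoints and at $\Psi^x$. Because $\textrm{SIR}_S$ is the pointwise minimum, on any sub-interval where $\textrm{SIR}_1\le \textrm{SIR}_2$ we have $\textrm{SIR}_S=\textrm{SIR}_1$ (monotonic pieces, governed by $\Psi^x$), and where $\textrm{SIR}_1\ge \textrm{SIR}_2$ we have $\textrm{SIR}_S=\textrm{SIR}_2$ (nondecreasing). I would argue each case as follows. In Case~1, $\textrm{SIR}_1(0,\hat h)<\textrm{SIR}_2(0,\hat h)$: on $[0,\Psi^x]$, $\textrm{SIR}_1$ only decreases and $\textrm{SIR}_2$ only increases, so the gap $\textrm{SIR}_2-\textrm{SIR}_1$ stays positive and the two curves never meet for $x\le\Psi^x$; for $x\ge\Psi^x$ a short monotonicity/continuity check (using that the crossing, if any, would have to be detected by the quartic, which by hypothesis has no feasible root) shows $\textrm{SIR}_S=\textrm{SIR}_1$ throughout, and since $\textrm{SIR}_1$ is minimized-from-above by its value at $x=0$ on the relevant portion — more precisely $\textrm{SIR}_S$ is maximized at the left endpoint because $\textrm{SIR}_1$ is decreasing there and dominates — we get $x^*=0$. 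In Cases~2 and~3, the extra hypothesis $\Psi^x\ge D$ forces $\textrm{SIR}_1$ to be strictly decreasing on the entire interval $[0,D]$, so $\textrm{SIR}_1(0,\hat h)\ge\textrm{SIR}_2(0,\hat h)$ together with the behaviour at $D$ pins down whether the curves cross inside $(0,D)$: if $\textrm{SIR}_2(D,\hat h)\ge\textrm{SIR}_1(D,\hat h)$ there is exactly one crossing $x_{sol}$ (Intermediate Value Theorem plus strict monotonicity of the difference $\textrm{SIR}_1-\textrm{SIR}_2$), and at $x_{sol}$ the increasing branch $\textrm{SIR}_2$ meets the decreasing branch $\textrm{SIR}_1$, so $\textrm{SIR}_S$ peaks there, giving $x^*=x_{sol}$; if instead $\textrm{SIR}_2(D,\hat h)<\textrm{SIR}_1(D,\hat h)$ the difference never changes sign, $\textrm{SIR}_S=\textrm{SIR}_2$ on all of $[0,D]$, and since $\textrm{SIR}_2$ is nondecreasing the maximizer is $x^*=D$. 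Cases~4 and~5 handle $\Psi^x<D$, where $\textrm{SIR}_1$ has its "valley" at $\Psi^x$: comparing $\textrm{SIR}_1(\Psi^x,\hat h)$ with $\textrm{SIR}_2(\Psi^x,\hat h)$ tells us whether $\textrm{SIR}_2$ dips below $\textrm{SIR}_1$ at the valley. In Case~4, $\textrm{SIR}_1(\Psi^x,\hat h)\le\textrm{SIR}_2(\Psi^x,\hat h)$ together with $\textrm{SIR}_1(0,\hat h)\ge\textrm{SIR}_2(0,\hat h)$ forces a crossing in $(0,\Psi^x]$ by the Intermediate Value Theorem applied to $\textrm{SIR}_1-\textrm{SIR}_2$ (which is strictly decreasing on $[0,\Psi^x]$), so a feasible root exists; before the smallest such root $x_{sol}$ we have $\textrm{SIR}_S=\textrm{SIR}_2$ (increasing), after it $\textrm{SIR}_S=\textrm{SIR}_1$, and on $[\Psi^x,D]$, $\textrm{SIR}_1$ is increasing again — so the two local maxima of $\textrm{SIR}_S$ are at $x_{sol}$ and at $x=D$, and we simply take whichever yields the larger value, which by $\textrm{SIR}_S(x_{sol})=\textrm{SIR}_1(x_{sol},\hat h)$ versus $\textrm{SIR}_S(D)=\min\{\textrm{SIR}_1(D,\hat h),\textrm{SIR}_2(D,\hat h)\}=\textrm{SIR}_1(D,\hat h)$ (using that after the last crossing $\textrm{SIR}_1\le\textrm{SIR}_2$) reduces to the stated comparison $\textrm{SIR}_1(x_{sol},\hat h)\gtrless\textrm{SIR}_1(D,\hat h)$. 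In Case~5, $\textrm{SIR}_1(\Psi^x,\hat h)>\textrm{SIR}_2(\Psi^x,\hat h)$ means $\textrm{SIR}_1>\textrm{SIR}_2$ persists at least down to the valley; whether the quartic has a root depends on finer data, but in all sub-configurations one checks that $\textrm{SIR}_S=\textrm{SIR}_2$ on a neighbourhood of $D$ while any interior stationary value of $\textrm{SIR}_S$ coming from the decreasing branch of $\textrm{SIR}_1$ is dominated, so the nondecreasing $\textrm{SIR}_2$ near $D$ wins and $x^*=D$.

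The main obstacle I anticipate is the bookkeeping in Cases~4 and~5: one must rule out the possibility that $\textrm{SIR}_1$ and $\textrm{SIR}_2$ cross more than twice inside $(0,D)$, so that the candidate-maximizer list $\{0,\ x_{sol},\ \Psi^x,\ D\}$ is genuinely exhaustive. This is where the quartic structure is doing real work: the difference $\textrm{SIR}_1-\textrm{SIR}_2$, after clearing the (sign-definite) denominators, is proportional to the quartic in~\eqref{eq:quartic}, hence has at most four real roots, and combined with the fact that $\textrm{SIR}_1-\textrm{SIR}_2$ is strictly decreasing on $[0,\Psi^x]$ and its derivative is more delicate on $[\Psi^x,D]$, one bounds the number of feasible sign changes and confirms that the extremum of the min is attained at an endpoint, at $\Psi^x$, or at a locus point. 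A secondary subtlety is the boundary bookkeeping when $\Psi^x$ coincides with an endpoint or when a quartic root lands exactly on $0$ or $D$; these degenerate configurations are absorbed into the non-strict inequalities in the case hypotheses, and I would dispatch them by a direct evaluation of $\textrm{SIR}_S$ at the endpoints using the closed forms in Lemma~\ref{lemma:lem1}.
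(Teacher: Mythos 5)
Your proposal follows essentially the same route as the paper: the paper's own proof is a one-line appeal to Lemma~\ref{lemma:lem1} and the monotone behavior of $\textrm{SIR}_1$ and $\textrm{SIR}_2$ around the stationary point $\Psi^x$ combined with $\textrm{SIR}_S=\min\{\textrm{SIR}_1,\textrm{SIR}_2\}$, which is exactly the case analysis you carry out (in considerably more detail than the paper provides). The loose ends you correctly flag in Cases~1 and~4 --- ruling out an additional crossing on $[\Psi^x,D]$ so that the candidate set $\{0,x_{sol},D\}$ is exhaustive and $\textrm{SIR}_S(D)=\textrm{SIR}_1(D,\hat h)$ --- are closed by the global-maximum statement for $\textrm{SIR}_1$ in Lemma~\ref{lemma:lem1} (which gives $\textrm{SIR}_1(x,\hat h)\leq \textrm{SIR}_1(0,\hat h)$ on $[0,D]$) together with the fact that both SIR expressions are nondecreasing on $[\Psi^x,D]$ so their minimum is maximized at $x=D$ there; the paper does not spell this out either.
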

\begin{proof}
The proof is the result of Lemma~\ref{lemma:lem1} considering the behavior of the SIR expressions. 
\end{proof}
\vspace{-1mm}
\subsubsection{Finding the optimal vertical position $h^*$ of the UAV for a given horizontal position $x=\hat{x}$}
In this case, the vertical positions (altitudes) satisfying~\eqref{eq:locus} can be easily derived since $\Lambda^{\pm}(x)$ on the right hand side of the equation is known. Using Lemma~\ref{lemma:lem1}, we obtain the following theorem.
\begin{theorem}\label{th:givenX}
Given a fixed horizontal position $x=\hat{x}$, the optimal altitude $h^*$ of the UAV is given by:

\textbf{Case 1)} $\hat{x}\leq \Psi^h$: $h^*=h_{min}$. 
 
\textbf{Case 2)} $\hat{x}> \Psi^h$ and~\eqref{eq:locus} has a feasible solution (either $h^+$ or $h^-$ belong to $[h_{min},h_{max}]$): $h^*$ is the same as the feasible solution of~\eqref{eq:locus}.
 
\textbf{Case 3)} $\hat{x}>\Psi^h$ and~\eqref{eq:locus} has no feasible solution: $h^*$ can be derived by solely inspecting the boundary positions:
     \begin{equation}
         h^*=\argmax_{h\in \{h_{min},h_{max}\}} \textrm{SIR}_S(\hat{x},h).
     \end{equation}
\end{theorem}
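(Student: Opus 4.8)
The plan is to reduce Theorem~\ref{th:givenX} to analyzing the single–variable function $\phi(h):=\textrm{SIR}_S(\hat{x},h)=\min\{\textrm{SIR}_1(\hat{x},h),\textrm{SIR}_2(\hat{x},h)\}$ on $[h_{min},h_{max}]$, using only the monotonicity profile supplied by Lemma~\ref{lemma:lem1}. Two preliminary remarks make the argument go through: both $\textrm{SIR}_1$ and $\textrm{SIR}_2$ depend on $h$ only through $h^2$, and—because $h_{min}>0$—the denominator factor $(D-\hat{x})^2+h^2$ of $\textrm{SIR}_2$ is strictly increasing in $h$, so $\textrm{SIR}_2(\hat{x},\cdot)$ is strictly decreasing on $[h_{min},h_{max}]$. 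Lemma~\ref{lemma:lem1} further tells us that $\partial\textrm{SIR}_1(\hat{x},h)/\partial h$ has the sign of $\hat{x}-\Psi^h$ (strictly, for $h>0$), which is exactly why the statement splits at $\hat{x}=\Psi^h$.

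First I would dispatch Case~1. If $\hat{x}\le\Psi^h$, then by Lemma~\ref{lemma:lem1} $\textrm{SIR}_1(\hat{x},\cdot)$ is non-increasing, while $\textrm{SIR}_2(\hat{x},\cdot)$ is non-increasing unconditionally; hence $\phi$, the pointwise minimum of two non-increasing functions, is non-increasing, and is therefore maximized at $h=h_{min}$, i.e.\ $h^*=h_{min}$.

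For Cases~2 and~3, assume $\hat{x}>\Psi^h$, so $\textrm{SIR}_1(\hat{x},\cdot)$ is strictly increasing while $\textrm{SIR}_2(\hat{x},\cdot)$ is strictly decreasing. Then $g(h):=\textrm{SIR}_1(\hat{x},h)-\textrm{SIR}_2(\hat{x},h)$ is strictly increasing on $[h_{min},h_{max}]$, so it has at most one zero there; since $g$, viewed as a function of $h^2$, vanishes precisely at the roots of the quadratic behind \eqref{eq:locus}, this shows that at most one of $h^{+},h^{-}$ is feasible, making ``the feasible solution of \eqref{eq:locus}'' unambiguous. The central structural fact is that $\phi$ coincides with the increasing branch $\textrm{SIR}_1(\hat{x},\cdot)$ on $\{h:g(h)\le0\}$ and with the decreasing branch $\textrm{SIR}_2(\hat{x},\cdot)$ on $\{h:g(h)\ge0\}$; because $g$ is increasing and changes sign at most once, $\phi$ is first non-decreasing and then non-increasing, i.e.\ quasiconcave on $[h_{min},h_{max}]$, with maximum at the crossing point of $\textrm{SIR}_1$ and $\textrm{SIR}_2$. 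If \eqref{eq:locus} has a feasible solution, that solution is precisely this crossing point, hence $h^*$, which is Case~2. If not, $g$ has no zero in $[h_{min},h_{max}]$, so by continuity and strict monotonicity either $g<0$ throughout (then $\phi=\textrm{SIR}_1(\hat{x},\cdot)$ is increasing and $h^*=h_{max}$) or $g>0$ throughout (then $\phi=\textrm{SIR}_2(\hat{x},\cdot)$ is decreasing and $h^*=h_{min}$); in either situation $\phi$ is monotone on the feasible interval and its maximizer is an endpoint, which is exactly the claim $h^*=\argmax_{h\in\{h_{min},h_{max}\}}\textrm{SIR}_S(\hat{x},h)$ of Case~3.

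The main obstacle I expect is getting the strictness right. One must verify that $\textrm{SIR}_2(\hat{x},\cdot)$ is \emph{strictly} decreasing (which is where $h_{min}>0$ is used) and that $\textrm{SIR}_1(\hat{x},\cdot)$ is \emph{strictly} increasing when $\hat{x}>\Psi^h$, so that $g$ is genuinely strictly monotone; without strictness one cannot exclude an entire subinterval on which $\textrm{SIR}_1\equiv\textrm{SIR}_2$, and the identification of the locus point as \emph{the} maximizer in Case~2 would break down. Everything else—reading off the signs from Lemma~\ref{lemma:lem1}, and invoking the elementary fact that the minimum of a non-decreasing and a non-increasing function is quasiconcave—is routine.
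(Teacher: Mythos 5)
Your argument is correct and follows exactly the route the paper intends: the paper's own proof is only the one-line remark that it is ``similar to the proof of Theorem~\ref{cor:1},'' which in turn just invokes the monotonicity profile of $\textrm{SIR}_1$ and $\textrm{SIR}_2$ from Lemma~\ref{lemma:lem1}, and your write-up is a faithful, fully detailed elaboration of that sketch (pointwise minimum of an increasing and a decreasing branch, split at $\Psi^h$, maximizer at the crossing point or at an endpoint). Your added observations---that $g=\textrm{SIR}_1-\textrm{SIR}_2$ is strictly monotone when $\hat{x}>\Psi^h$, so at most one of $h^{+},h^{-}$ is feasible and the locus point is unambiguous---are correct and actually tighten a detail the paper leaves implicit.
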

\begin{proof}
The proof is similar to the proof of Theorem~\ref{cor:1}.
\end{proof}
\subsubsection{Finding the optimal position when both $h$ and $x$ of the UAV are adjustable}
In the previous scenarios, the locus defined in~\eqref{eq:locus} reduces to an equation since one variable (either $h$ or $x$) is given, which is not the case here. In this case, the optimal position of the UAV is identified in the following theorem. 

\begin{theorem}\label{th:var}
Let $\Lambda$ denote the set of all the feasible solutions of the locus described in~\eqref{eq:locus}. The optimal position of the UAV $(x^*,h^*)$ is given by:

\textbf{Case \hspace{-.1mm}1)} If the Locus has no solution, the optimal position can be derived by solely examining the boundary positions:
  \begin{equation}
     (x^*,h^*)=\displaystyle \hspace{-15mm}\argmax_{(x,h)\in\{(0,h_{min}),(0,h_{max}),(D,h_{min}),(D,h_{max})\}}\hspace{-20mm} \textrm{SIR}_S(x,h).
    \end{equation}
    
\textbf{Case \hspace{-.1mm}2)} Upon having at least one feasible solution for the locus, if $\textrm{SIR}_2(D,h_{min})\leq \max \{ \textrm{SIR}_1(D,h_{max}),\textrm{SIR}_1(D,h_{min}) \}$, $(x^*,h^*)=(D,h_{min})$. Also, if $\textrm{SIR}_1(0,h_{min})\leq \textrm{SIR}_2(0,h_{min})$, $(x^*,h^*)=(0,h_{min})$. Otherwise, let $(\tilde{x},\tilde{h}) = \argmax_{(x,h)\in \Lambda} \textrm{SIR}_S(x,h)$, then the optimal position of the UAV is given as follows:
\begin{itemize}
      \item If $\psi^x\geq D$: $x^*=\tilde{x}$ and $h^*$ can be derived using Theorem~\ref{th:givenX} considering $\hat{x}=\tilde{x}$.
    \item  If $\psi^x< D$ and $\tilde{x} \geq \psi^x$: $x^*=D$ and $h^*$ can be derived using Theorem~\ref{th:givenX} considering $\hat{x}=D$.
       \item If $\psi^x<  D$ and $\psi^h\leq \tilde{x} <\psi^x$ and $\textrm{SIR}_1(\tilde{x},\tilde{h})\geq\textrm{SIR}_1(D,h_{max})$: $(x^*,h^*)=(\tilde{x},\tilde{h})$.
     \item If $\psi^x<  D$ and $\psi^h\leq \tilde{x} <\psi^x$ and $\textrm{SIR}_1(\tilde{x},\tilde{h})<\textrm{SIR}_1(D,h_{max})$: $x^*=D$ and $h^*$ can be derived using Theorem~\ref{th:givenX} considering $\hat{x}=D$.
        \item If $\psi^x< D$ and $\tilde{x} <\psi^h$ and  $\textrm{SIR}_1(\tilde{x},h_{min})\geq \textrm{SIR}_1(D,h_{max})$: $(x^*,h^*)=(\tilde{x},h_{min})$.
         \item If $\psi^x< D$ and $\tilde{x} <\psi^h$ and  $\textrm{SIR}_1(\tilde{x},h_{min})< \textrm{SIR}_1(D,h_{max})$: $x^*=D$ and $h^*$ can be derived using Theorem~\ref{th:givenX} considering $\hat{x}=D$.
      \end{itemize}
\end{theorem}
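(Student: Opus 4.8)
The aim is to reduce the maximization of $\textrm{SIR}_S=\min\{\textrm{SIR}_1,\textrm{SIR}_2\}$ over the compact rectangle $\mathcal{R}=[0,D]\times[h_{min},h_{max}]$ to a comparison among finitely many explicit points, using the monotonicity facts of Lemma~\ref{lemma:lem1} together with the locus of Lemma~\ref{th:main}. A maximizer exists since $\textrm{SIR}_S$ is continuous on the compact $\mathcal{R}$. Partition $\mathcal{R}$ into $\mathcal{R}_1=\{(x,h):\textrm{SIR}_1\le\textrm{SIR}_2\}$ and $\mathcal{R}_2=\{(x,h):\textrm{SIR}_2\le\textrm{SIR}_1\}$, so that $\textrm{SIR}_S=\textrm{SIR}_1$ on $\mathcal{R}_1$, $\textrm{SIR}_S=\textrm{SIR}_2$ on $\mathcal{R}_2$, and $\mathcal{R}_1\cap\mathcal{R}_2$ is exactly the feasible locus $\Lambda$ of \eqref{eq:locus}. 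Thus $\max_{\mathcal{R}}\textrm{SIR}_S=\max\{\max_{\mathcal{R}_1}\textrm{SIR}_1,\max_{\mathcal{R}_2}\textrm{SIR}_2\}$, and since by Lemma~\ref{lemma:lem1} neither $\textrm{SIR}_1$ nor $\textrm{SIR}_2$ has an interior stationary point, a global maximizer of $\textrm{SIR}_S$ must lie on $\partial\mathcal{R}$ or on $\Lambda$.

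\noindent\textbf{Case~1 (empty locus).} If $\Lambda=\emptyset$, then $\textrm{SIR}_1-\textrm{SIR}_2$ is continuous and nowhere zero on the connected set $\mathcal{R}$, hence of constant sign, so $\textrm{SIR}_S\equiv\textrm{SIR}_1$ or $\textrm{SIR}_S\equiv\textrm{SIR}_2$ on $\mathcal{R}$. By Lemma~\ref{lemma:lem1}, $\textrm{SIR}_2$ is coordinatewise monotone with no interior stationary point, while $\textrm{SIR}_1$ has no interior stationary point in $h$ and its only $x$-stationary point $\Psi^x$ is a minimum (the sign of $\partial_x\textrm{SIR}_1$ flips from $-$ to $+$ there); hence in either alternative the maximum over $\mathcal{R}$ is attained at a corner, and comparing the four corner values gives Case~1.

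\noindent\textbf{Case~2 (nonempty locus).} Here I would first dispatch the two ``corner-wins'' situations. As $\textrm{SIR}_2$ attains its global maximum over $\mathcal{R}$ at $(D,h_{min})$, if $\textrm{SIR}_1(D,h_{min})\ge\textrm{SIR}_2(D,h_{min})$ then $(D,h_{min})\in\mathcal{R}_2$ and $\textrm{SIR}_S(D,h_{min})=\max_{\mathcal{R}}\textrm{SIR}_2\ge\textrm{SIR}_S$ everywhere; the stated weaker hypothesis $\textrm{SIR}_2(D,h_{min})\le\max\{\textrm{SIR}_1(D,h_{max}),\textrm{SIR}_1(D,h_{min})\}$ follows from this via an intermediate-value argument along the edge $x=D$, using that $\textrm{SIR}_1(D,\cdot)$ is monotone (increasing iff $D\ge\Psi^h$) and $\textrm{SIR}_2(D,\cdot)$ is decreasing. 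Symmetrically, $\textrm{SIR}_1(0,h_{min})\le\textrm{SIR}_2(0,h_{min})$ puts $(0,h_{min})\in\mathcal{R}_1$, and since $\textrm{SIR}_1$ is globally maximized over $\mathcal{R}$ at $(0,h_{min})$, forces $(x^*,h^*)=(0,h_{min})$. In the remaining regime a maximizer lies on an open edge of $\mathcal{R}$ or on $\Lambda$: on each edge $\textrm{SIR}_S$ is a one-variable min whose maximizer is delivered by Theorem~\ref{cor:1} (fixed $h\in\{h_{min},h_{max}\}$) or Theorem~\ref{th:givenX} (fixed $x\in\{0,D\}$), and monotonicity forces every edge maximizer to be a corner or a point of $\Lambda$, while on $\Lambda$ we have $\textrm{SIR}_S=\textrm{SIR}_1=\textrm{SIR}_2$ with best point $(\tilde{x},\tilde{h})=\argmax_{(x,h)\in\Lambda}\textrm{SIR}_S(x,h)$. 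It then remains to compare $(\tilde{x},\tilde{h})$ with sliding toward the edge $x=D$, where $\textrm{SIR}_1$ is the binding term and is increasing in $x$ for $x\ge\Psi^x$ and in $h$ for $x\ge\Psi^h$: if $\Psi^x\ge D$ then $\textrm{SIR}_1$ is non-increasing in $x$ on all of $[0,D]$, so no horizontal move helps and $x^*=\tilde{x}$ with $h^*$ from Theorem~\ref{th:givenX} at $\hat{x}=\tilde{x}$; if $\Psi^x<D$ (so $\Psi^h\le\Psi^x<D$), then $\textrm{SIR}_1(D,h_{max})=\max_h\textrm{SIR}_1(D,h)$ upper-bounds $\textrm{SIR}_S$ along $x=D$, and comparing it with $\textrm{SIR}_1(\tilde{x},\tilde{h})$ (when $\Psi^h\le\tilde{x}$) or with $\textrm{SIR}_1(\tilde{x},h_{min})$ (when $\tilde{x}<\Psi^h$, $h_{min}$ being optimal on that vertical line by Theorem~\ref{th:givenX}), further split by whether $\tilde{x}\ge\Psi^x$, reproduces the remaining bullet sub-cases, with the altitude at $x=D$ recovered via Theorem~\ref{th:givenX} at $\hat{x}=D$.

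\noindent\textbf{Main obstacle.} The conceptual content is light; the work is the Case~2 bookkeeping — verifying that the finite candidate list (the four corners, the locus optimizer $(\tilde{x},\tilde{h})$, and the Theorem~\ref{th:givenX} point on $x=D$) really exhausts the possible locations of a global maximizer. This rests on using Lemma~\ref{lemma:lem1} to collapse $\mathcal{R}_1$, $\mathcal{R}_2$, and each boundary edge to monotone one-dimensional problems, and on checking that the listed sign conditions are jointly exhaustive; here the inequality $\Psi^h\le\Psi^x$, immediate since the quantity under the radical defining $\Psi^x$ is at least $(Y_{_{\textrm{MSI}}}^2+X_{_{\textrm{MSI}}}^2)^2$, is what makes the trichotomy for $\tilde{x}$ well posed.
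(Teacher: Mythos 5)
Your overall strategy --- existence by compactness, the decomposition $\mathcal{R}=\mathcal{R}_1\cup\mathcal{R}_2$ with $\textrm{SIR}_S$ equal to one coordinatewise-monotone-structured function on each piece, and the conclusion that a maximizer must lie on the boundary of the rectangle or on the locus --- is sound and is evidently the argument the authors intend (the paper itself offers nothing beyond ``similar to the previous theorems''). Your Case~1, your handling of the condition $\textrm{SIR}_1(0,h_{min})\leq\textrm{SIR}_2(0,h_{min})$ via the global maximum of $\textrm{SIR}_1$ at $(0,h_{min})$, and the observation $\Psi^h\leq\Psi^x$ that makes the trichotomy on $\tilde{x}$ well posed are all correct.

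The genuine gap is in the first branch of Case~2. You prove that $(D,h_{min})$ is optimal under the hypothesis $\textrm{SIR}_1(D,h_{min})\geq\textrm{SIR}_2(D,h_{min})$ (call it A), and then assert that the theorem's hypothesis $\textrm{SIR}_2(D,h_{min})\leq\max\{\textrm{SIR}_1(D,h_{max}),\textrm{SIR}_1(D,h_{min})\}$ (call it B) ``follows from'' A. That is the implication in the wrong direction: A$\,\Rightarrow\,$B is trivial and useless, whereas the theorem requires B$\,\Rightarrow\,(x^*,h^*)=(D,h_{min})$, and your argument only delivers A$\,\Rightarrow\,(x^*,h^*)=(D,h_{min})$. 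The two hypotheses coincide only when $D<\Psi^h$, so that the maximum in B equals $\textrm{SIR}_1(D,h_{min})$. When $D\geq\Psi^h$, B is strictly weaker than A, and in the regime $\textrm{SIR}_1(D,h_{min})<\textrm{SIR}_2(D,h_{min})\leq\textrm{SIR}_1(D,h_{max})$ the claimed conclusion actually fails: along the edge $x=D$ the function $\textrm{SIR}_1(D,\cdot)$ is increasing and $\textrm{SIR}_2(D,\cdot)$ is decreasing, so they cross at some $h_0\in(h_{min},h_{max}]$ and $\textrm{SIR}_S(D,h_0)=\textrm{SIR}_1(D,h_0)>\textrm{SIR}_1(D,h_{min})=\textrm{SIR}_S(D,h_{min})$. (Concretely: $p_t=p_{_{\textrm{MSI}}}=1$, $X_{_{\textrm{MSI}}}=Y_{_{\textrm{MSI}}}=30$, $D=35$, $h\in[10,50]$, $\eta_{_{\textrm{NLoS}}}=\mu_{_{\textrm{NLoS}}}$, $p_u=0.09$ yields $\textrm{SIR}_1(D,10)\approx0.774<\textrm{SIR}_2(D,10)\approx0.833<\textrm{SIR}_1(D,50)\approx0.919$, and the crossing point $(D,h_0)$ is a feasible locus point, so this sits inside Case~2.) This step therefore cannot be repaired by rewording: either the branch must be run under hypothesis A, or the edge crossing point must be admitted as a candidate (it belongs to $\Lambda$, so the ``otherwise'' branch would catch it if the first branch did not preempt it). Separately, the remaining sub-bullets are sketched at roughly the right level, but as you concede, the exhaustiveness of the sign conditions and the decisiveness of comparing against $\textrm{SIR}_1(D,h_{max})$ --- which is an upper bound on, not the value of, $\textrm{SIR}_S$ along $x=D$ --- are asserted rather than verified.
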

\begin{proof}
The proof is similar to the previous theorems.
\end{proof}
\vspace{-3mm}
\section{Position Planning for Multiple UAVs}\label{sec:multiple}
\noindent We investigate the placement planning upon utilizing
multiple UAVs from two different points of view. First, we consider a cost effective design, in which the network designer aims to identify the minimum required number of utilized UAVs and determine their positions so as to satisfy a predetermined SIR of the system. Second, we assume that the network designer is provided with a set of UAVs, and endeavors to configure their positions so as to maximize the SIR of the system. 

\begin{figure}[t]
\includegraphics[width=8.8cm,height=2.1cm]{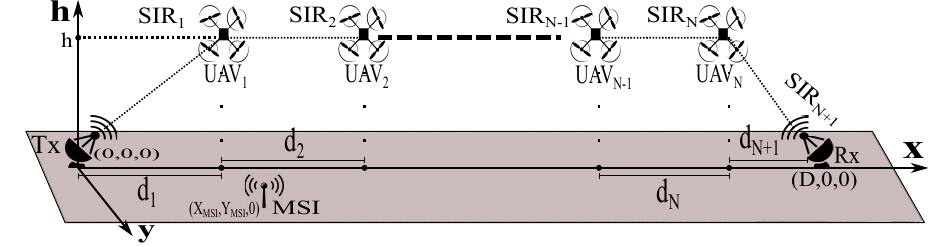}
		\caption{Multiple UAVs acting as relays between a pair of Tx and Rx coexisting with an MSI.}
		 \label{fig:multiple}
		  \vspace{-0.3mm}
		 \end{figure}
 \vspace{-2.0mm}
\subsection{Network Design to Achieve a Desired SIR}\label{subsec:1}
Let $\Gamma$ denote the desired SIR of the system and assume that $N$ is the minimum number of UAVs needed to satisfy the SIR constraint. We index the Tx node by $0$, the UAVs between the Tx and the Rx from $1$ to $N$, and the Rx node by $N+1$. We denote the horizontal distance between two consecutive nodes $i-1$ and $i$ by $d_i$, $ 1\leq i\leq N+1$, and consider $\mathbf{d}=[d_1,\cdots,d_{N+1}]$. To have tractable derivations, we assume that all the UAVs have the same altitude $h$. It can be verified that this assumption maximizes the SIR between two adjacent UAVs for a given horizontal distance. The model is depicted in Fig.~\ref{fig:multiple}. Let $\textrm{SIR}_k$ denote the SIR at the $k^{th}$ node, which can be obtained as:
\begin{equation}\label{eq:SIRs}
\hspace{-1mm}
\begin{aligned}
  & \textrm{SIR}_1(\mathbf{d},h)= \frac{p_t \eta^{-1}_{_{\textrm{NLoS}}}\left(\sqrt{d_1^2+h^2}\right)^{-2}}{p_{_{\textrm{MSI}}}\eta^{-1}_{_{\textrm{NLoS}}}\left(\sqrt{(X_{_{\textrm{MSI}}}-d_1)^2+Y^2_{\textrm{MSI}}+h^2}\right)^{-2}},\\ 
    \vdots\\\vspace{-1mm}
  & \textrm{SIR}_N(\mathbf{d},h)=\hspace{-0.5mm}\frac{p_u \mu^{-1}_{_{\textrm{LoS}}} \left(\sqrt{d_N^2}\right)^{-2}}{\hspace{-1.5mm}p_{_{\textrm{MSI}}}\eta^{-1}_{_{\textrm{NLoS}}}\hspace{-0.5mm}\left(\hspace{-0.5mm}\sqrt{\hspace{-0.5mm}(X_{_{\textrm{MSI}}}\hspace{-1.0mm}-\hspace{-0.5mm}\sum_{i=1}^{N} d_i)^2\hspace{-0.5mm}+\hspace{-0.5mm}Y^2_{\textrm{MSI}}\hspace{-0.5mm}+\hspace{-0.5mm}h^2\hspace{-0.5mm}}\right)^{\hspace{-0.5mm}-2}},\\ \vspace{-1mm}
  & \textrm{SIR}_{N+1}(\mathbf{d},h)=\frac{p_u\eta^{-1}_{_{\textrm{NLoS}}} \left(\sqrt{d_{N+1}^2+h^2}\right)^{-2}}{p_{_{\textrm{MSI}}}\mu^{-1}_{_{\textrm{NLoS}}}\left(\sqrt{(X_{_{\textrm{MSI}}}-D)^2+Y^2_{\textrm{MSI}}}\right)^{-2}}.
    \end{aligned}
 \end{equation} 
Similar to the single UAV scenario, $\textrm{SIR}_S $ is given by:
 \begin{equation}\label{eq:SIR_s2}
    \textrm{SIR}_S=\min \big\{\textrm{SIR}_1,\textrm{SIR}_2,\cdots, \textrm{SIR}_N, \textrm{SIR}_{N+1}\big\}.
\end{equation}
\subsubsection{The SIR expressions and the feasibility constraints}
From~\eqref{eq:SIRs}, it can be observed that achieving any desired $\textrm{SIR}_S$ ($\Gamma$) may not be feasible. To derive the feasibility conditions for the $\Gamma$, we need to analyze the links between the Tx to $\textrm{UAV}_1$, among the adjacent UAVs, and from $\textrm{UAV}_{N}$ to the Rx.

Analysis of the links between the Tx and $\textrm{UAV}_1$ ($\textrm{SIR}_1$) and between $\textrm{UAV}_{N}$ and the Rx ($\textrm{SIR}_{N+1}$) is similar to the discussion provided in Section~\ref{sec:singleUAV} (see Lemma~\ref{lemma:lem1}). Hence, we skip them and consider the SIR at $\textrm{UAV}_i$, $ 2\leq i\leq N$. For this UAV, the stationary point $\Phi_i^d$ of the SIR expression is given by:
\begin{equation}
    \Phi_i^d=\frac{h^2+Y_{_{\textrm{MSI}}}^2+(X_{_{\textrm{MSI}}}-\sum_{j=1}^{i-1} d_j)^2}{X_{_{\textrm{MSI}}}-\sum_{j=1}^{i-1} d_j},
\end{equation}
using which it can be validated that:
\begin{equation}\label{eq:dmiiidds}
\resizebox{0.89\hsize}{!}{$
\hspace{-15.5mm}
\displaystyle\max_{\hspace{13mm} x\in [0,D], h\in [h_{min},h_{max}]} \hspace{-11.5mm}\textrm{SIR}_i(x,h) \hspace{-.5mm}\leq\hspace{-.5mm} \frac{p_u\mu^{-1}_{_{\textrm{LoS}}}\hspace{-.5mm}\left(\max(X_{_{\textrm{MSI}}},D-X_{_{\textrm{MSI}}})+ Y_{_{\textrm{MSI}}}^2\hspace{-.5mm}+\hspace{-.5mm}h^2\right)}{p_{_{\textrm{MSI}}}\eta^{-1}_{_{\textrm{NLoS}}} d^2_{min} },\;$}
\end{equation}
\begin{equation}
\begin{cases}
    \frac{\partial \textrm{SIR}_i(x,h)}{\partial d_i}\hspace{-.5mm}\geq \hspace{-.5mm}0 \; \textrm{if} \; d_i\hspace{-.5mm}\geq\hspace{-.5mm}\Phi_i^d,\\
    \frac{\partial \textrm{SIR}_i(x,h)}{\partial d_i}\hspace{-.5mm}< \hspace{-.5mm}0 \; \textrm{O.W.},
    \end{cases}
\end{equation}
where $d_{min}$ is the minimum feasible distance between two UAVs considering the mechanical constrains. Combining these derivations with those in Section~\ref{sec:singleUAV}, we obtain the feasibility condition declared in~\eqref{eq:conditioforSIR}.
\begin{table*}[t]
\vspace{1.3mm}
\begin{minipage}{0.99\textwidth}
\begin{equation}\label{eq:conditioforSIR}
    \Gamma \leq \min \Bigg\{\frac{p_t\left(X^2_{\textrm{MSI}}+Y_MSI^2+h^2 \right)}{p_{_{\textrm{MSI}}} h^2},\frac{p_u\mu^{-1}_{_{\textrm{LoS}}}\hspace{-.5mm}\left(\max(X_{_{\textrm{MSI}}},D-X_{_{\textrm{MSI}}})+ Y_{_{\textrm{MSI}}}^2\hspace{-.5mm}+\hspace{-.5mm}h^2\right)}{p_{_{\textrm{MSI}}}\eta^{-1}_{_{\textrm{NLoS}}} d^2_{min} } , \frac{p_u\eta^{-1}_{_{\textrm{NLoS}}}\left(Y_{_{\textrm{MSI}}}^2+(D-X_{_{\textrm{MSI}}})^2\right)}{p_{_{\textrm{MSI}}}\mu^{-1}_{_{\textrm{NLoS}}}h^2}\Bigg\}
\end{equation}
\end{minipage}
\end{table*}
\subsubsection{Design Methodology}
To derive the minimum number of needed UAVs and their optimal positions so as to satisfy a desired $\textrm{SIR}_S$, we pursue the following three main steps: (i) Considering $\textrm{SIR}_1$, for $\textrm{UAV}_1$, we obtain the maximum distance from the Tx (toward the Rx) ${d}^*_1$ that satisfies the SIR constraint. (ii) Considering $\textrm{SIR}_{N+1}$, for $\textrm{UAV}_{N}$, we obtain the maximum distance from the Rx (toward the Tx)  ${d}^*_{N+1}$ that satisfies the desired SIR. (iii) Consider the segment between $\textrm{UAV}_1$ and $\textrm{UAV}_{N}$ with length $D-{d}^*_1-{d}^*_{N+1}$, we use the SIR expressions for the remaining UAVs to minimize the number of UAVs required to cover the distance while satisfying the desired $\textrm{SIR}_S$. In the following, we explain these steps in more detail.

Considering $\textrm{SIR}_1$, we solve $\textrm{SIR}_1=\Gamma$, the answer of which is given by~\eqref{eq:d_1}. \begin{table*}[t]
\begin{minipage}{0.99\textwidth}
\vspace{-2.8mm}
\begin{equation}\label{eq:d_1}
 d^+_1,d^-_1,=\frac{p_t X_{_{\textrm{MSI}}}\pm\sqrt{p^2_t X_{_{\textrm{MSI}}}^2-(p_t-\Gamma p_{_{\textrm{MSI}}})\left(p_t\left(X_{_{\textrm{MSI}}}^2+Y_{_{\textrm{MSI}}}^2\right)+h^2(p_t-p_{_{\textrm{MSI}}}\Gamma)\right)}}{p_t-\Gamma p_{_{\textrm{MSI}}}}
\end{equation}
\end{minipage}
\end{table*}
Then, using Lemma~\ref{lemma:lem1}, $d^*_1$ is given by:
\begin{equation}\label{eq:D1}
    d^*_1=\begin{cases}
              d_1^- \;\; \textrm{if}\;\; d_1^+> D,\\
            d_1^+ \;\; \textrm{if}\;\; d_1^+< \Psi^x \textrm{ and } d_1^+\leq D,\\
            D \;\; \textrm{O.W}.
    \end{cases}
\end{equation}
In the last case of \eqref{eq:D1}, the optimal number of UAVs is $1$, and the UAV should be placed at $x=D$. Assuming $d^*_1< D$, using Lemma~\ref{lemma:lem1}, $d^*_{N+1}$ can be obtained as:
\begin{equation}\label{eq:DNplus1}
  d^*_{N+1}=\sqrt{\frac{p_u\eta^{-1}_{_{\textrm{NLoS}}}\left((X_{_{\textrm{MSI}}}-D)^2+Y_{_{\textrm{MSI}}}^2\right)}{\Gamma p_{_{\textrm{MSI}}}\mu^{-1}_{_{\textrm{NLoS}}}}-h^2}.
\end{equation}
Afterward, we solve $\textrm{SIR}_{k}=\Gamma$ and use \eqref{eq:dmiiidds} to obtain $d^*_k$, $2\leq k \leq N$, which can be obtained as:
\begin{equation}\label{eq:ds}
 d^*_k=\begin{cases}
      d_k^- \;\; \textrm{if}\;\; d_k^+> D-\sum_{j=1}^{k-1} d^*_j-d^*_{N+1},\\
      d_k^+ \;\; \textrm{if}\;\; d_k^+< \Phi_k^d \textrm{ and } d_k^+\leq D-\sum_{j=1}^{k-1} d^*_j-d^*_{N+1},\\
        D-d^*_{N+1}-d_{min} \;\; \textrm{O.W.},
  \end{cases}
\end{equation}
where $d^-_k,d^+_k$ are given in~\eqref{eq:d_k}.
\begin{table*}[t]
\vspace{-3.3mm}
\begin{minipage}{0.99\textwidth}
\begin{equation}\label{eq:d_k}
\resizebox{0.95\hsize}{!}{$
          d^+_k,d^-_k=\frac{p_u \mu^{-1}_{_{\textrm{LoS}}}\left(X_{_{\textrm{MSI}}}-\sum_{i=1}^{k-1}d^*_i \right)\pm\sqrt{p^2_u \mu^{-2}_{\textrm{LoS}}\left(X_{_{\textrm{MSI}}}-\sum_{i=1}^{k-1} d^*_i \right)^2-p_u \mu^{-1}_{_{\textrm{LoS}}}\left(p_u \mu^{-1}_{_{\textrm{LoS}}}-\Gamma p_{_{\textrm{MSI}}}\eta^{-1}_{_{\textrm{NLoS}}}\right)\left(h^2+Y_{_{\textrm{MSI}}}^2+\left(X_{_{\textrm{MSI}}}-\sum_{i=1}^{k-1} d^*_i \right)^2\right)}}{p_u \mu^{-1}_{_{\textrm{LoS}}}-\Gamma p_{_{\textrm{MSI}}}\eta^{-1}_{_{\textrm{NLoS}}}}$}
\end{equation}
\vspace{-6.3mm}
\end{minipage}
\end{table*}
Finally, the minimum number of required UAVs $N^*$ is given by:
\begin{equation}\label{eq:Opt}
    N^*=\argmin_{N\in\mathbb{N}} \sum _{k=2}^{N} d^*_k\geq D-d^*_1-d^*_{N+1}.
\end{equation}
Note that according to~\eqref{eq:ds} and~\eqref{eq:d_k}, calculation of each $d^*_k$ only requires the knowledge of $d^*_{k'}$, $\forall k' < k$. Hence, the solution of \eqref{eq:Opt} can be easily obtained by initially assuming $N=2$ and increasing the value of $N$ by $1$ until the constraint in the right hand side of the equation is met.
\vspace{-1mm}
\subsection{Position Planning for a Given Number of UAVs}
In this case, there exist multiple UAVs dedicated as relays to the network, which are expected to be positioned to maximize the SIR of the system. To this end, 
an algorithm can be immediately proposed based on our results in the previous subsection, which considers the number of UAVs as given and gradually increases $\Gamma$ starting from $\Gamma=0$ to find the maximum value of $\Gamma$ for which $N^*$  in~\eqref{eq:Opt} becomes equal to the number of given UAVs. Afterward, the positions of the UAVs can be obtained as discussed before. Nevertheless, this is a centralized approach. In the following, we propose a distributed algorithm for the same purpose, where the UAVs locally compute their positions based on the knowledge of the positions of their adjacent neighbors, which can be obtained through simple message passing. Considering the SIR expressions in~\eqref{eq:SIRs}, for $\textrm{UAV}_{\floor{N/2}+2}$ to $\textrm{UAV}_{\floor{N}}$, we express the SIRs w.r.t the positions of the UAVs located after them (closer to Rx) as:
\begin{equation}\label{eq:SIR_multi}
\hspace{-0mm}
\begin{aligned}
  & \displaystyle \textrm{SIR}_{\floor{N/2}+2}(\mathbf{d},h)\hspace{-.5mm}=\hspace{-.5mm}\frac{\hspace{-4mm}p_u\mu^{-1}_{_{\textrm{LoS}}}\hspace{-.5mm}\left(\hspace{-.5mm}\left(\hspace{-.5mm}X_{_{\textrm{MSI}}}-D+\hspace{-5mm}\displaystyle\sum_{i=\floor{N/2}+3}^{N+1}\hspace{-4mm}d_{i}\right)^2\hspace{-.5mm}\hspace{-.5mm}\hspace{-.5mm}+\hspace{-1mm}Y_{_{\textrm{MSI}}}^2\hspace{-1mm}+h^2\hspace{-1mm}\right)}{p_{_{\textrm{MSI}}}\eta^{-1}_{_{\textrm{NLoS}}}d_{\floor{N/2}+2}^2},\\
  & \textrm{SIR}_{\floor{N/2}+3}(\mathbf{d},h)\hspace{-.5mm}=\hspace{-.5mm}\frac{\hspace{-4mm}p_u\mu^{-1}_{_{\textrm{LoS}}}\left(\hspace{-.5mm}\left(\hspace{-.5mm}X_{_{\textrm{MSI}}}-D+\hspace{-5mm}\displaystyle\sum_{i=\floor{N/2}+4}^{N+1}\hspace{-4mm}d_{i}\right)^2\hspace{-.5mm}\hspace{-.5mm}\hspace{-.5mm}+\hspace{-1mm}Y_{_{\textrm{MSI}}}^2\hspace{-1mm}+h^2\hspace{-1mm}\right)}{p_{_{\textrm{MSI}}}\eta^{-1}_{_{\textrm{NLoS}}}d_{\floor{N/2}+3}^2},\\
  \vdots\\
  &\textrm{SIR}_{N}(\mathbf{d},h)\hspace{-0.5mm}=\hspace{-.5mm}\frac{p_u\mu^{-1}_{_{\textrm{LoS}}}\left(\hspace{-.5mm}\left(X_{_{\textrm{MSI}}}-D+d_{N+1}\right)^2\hspace{-.5mm}+Y_{_{\textrm{MSI}}}^2+h^2\hspace{-.5mm}\right)}{p_{_{\textrm{MSI}}}\eta^{-1}_{_{\textrm{NLoS}}}d_N^2}.
\end{aligned}
\end{equation}
The following facts are immediate consequences of examining \eqref{eq:SIR_multi} and \eqref{eq:d_k}:  (i) With a known $d_{N+1}$ and a (hypothetically) given value for $\textrm{SIR}_S$ ($\Gamma$), starting with $\textrm{UAV}_{N}$ the distance between the subsequent UAVs can be locally obtained up to $\textrm{UAV}_{\floor{N/2}+2}$ using a \textit{backward propagation}, by which each UAV transmits its position rearward to the adjacent UAV located toward the Tx (see \eqref{eq:SIR_multi}), where
\begin{equation}\label{eq:Dmid}
    \hspace{0mm}d_k=\sqrt{\hspace{-1mm}\frac{\left(X_{_{\textrm{MSI}}}-D+\hspace{-2mm}\displaystyle\sum_{i=k+1}^{N+1}\hspace{-2mm}d_{i}\right)^2\hspace{-2mm}+Y_{_{\textrm{MSI}}}^2+h^2}{\Gamma p_{_{\textrm{MSI}}}\mu_{_{\textrm{LoS}}} /\left(p_u\eta_{_{\textrm{NLoS}}}\right)}}, \;\; \floor{N/2}+2\leq k\leq N. 
\end{equation}
(ii) With a known $d_1$ and a (hypothetically) given value for the $\textrm{SIR}_S$ ($\Gamma$), starting with $\textrm{UAV}_1$, the distance between the subsequent UAVs can be obtained up to $\textrm{UAV}_{\floor{N/2}}$ in a \textit{forward propagation}, by which each UAV transmits its position to the adjacent UAV located toward the Rx (see~\eqref{eq:ds}, \eqref{eq:d_k}). Hence, to obtain the positions three parameters are needed: $d_1$, $d_{N+1}$ and $\Gamma$. Note that in the mentioned propagations, no message is exchanged between the two UAVs in the middle ($\textrm{UAV}_{\floor{N/2}},\textrm{UAV}_{\floor{N/2}+1}$), and thus the SIR at $\textrm{UAV}_{\floor{N/2}+1}$ might be less than $\Gamma$. Given these facts, we propose a distributed algorithm for position planning of multiple UAVs, the pseudo code of which is given in Algorithm~\ref{alg:fulldist}. In this algorithm, we first derive the position of the first UAV, the position of the last UAV, and the initial desired $\textrm{SIR}_S$ ($\Gamma$) (lines~\ref{lineinAlg1}-\ref{lineinAlg2}). Afterward, starting from the beginning and the end, the UAVs locally obtain their positions w.r.t the position of their adjacent UAVs (lines~\ref{lineinAlg3}-\ref{lineinAlg4}) so as to satisfy the desired $\textrm{SIR}_S$. Then, the SIR at $\textrm{UAV}_{\floor{N/2}+1}$ is inspected (line~\ref{lineinAlg5}). If this SIR satisfies the desired SIR of the system ($\textrm{SIR}_{\floor{N/2}+1}\geq \Gamma$), the algorithm stops; otherwise, it moves the first and the last UAVs and starts over with a new desired value for $\Gamma$ (lines~\ref{lineinAlg6}-\ref{lineinAlg7}). Note that simultaneous identification of the positions achieved through using forward and backward propagations leads to a faster convergence since at each time instant two distances are calculated in parallel.

 \begin{algorithm}[t]
 	\caption{\makebox[7cm]{\small Distributed position planning for multiple UAVs}}\label{alg:fulldist}
 	\SetKwFunction{Union}{Union}\SetKwFunction{FindCompress}{FindCompress}
 	\SetKwInOut{Input}{input}\SetKwInOut{Output}{output}
 	 	{\footnotesize
 	\Input{Horizontal step size $\epsilon$.}
  $i=0$,
  $d^{(i)}_{1}=0,d^{(i)}_{N+1}=0	$.\\
  $\Gamma=\min \big\{\textrm{SIR}_1(d^{(i)}_{1},h),\textrm{SIR}_{N+1}(d^{(i)}_{N+1},h)\big\}$.\label{lineinAlg1}\\	
  \uIf{$\Gamma=\textrm{SIR}_1(d^{(i)}_{1},h)$}{
  \hspace{-3.5mm}\resizebox{0.350\textwidth}{!}{ Place $\textrm{UAV}_1$ at $x=d^{(i)}_{1}$ and re-derive $d^{(i)}_{N+1}$ using~\eqref{eq:DNplus1}}. 
  }\Else{
  \hspace{-3.5mm}\resizebox{0.385\textwidth}{!}{ Place $\textrm{UAV}_2$ at $x=D-{d^{(i)}_{N+1}}$ and re-derive $d^{(i)}_{1}$ using~\eqref{eq:D1}.}\\
}\label{lineinAlg2}
  Given $d^{(i)}_{N+1}$, obtain $d^{(i)}_{N},d^{(i)}_{N-1},\cdots,d^{(i)}_{\floor{N/2}+2}$ using backward propagation based on~\eqref{eq:Dmid}.\label{lineinAlg3}\\
  Given $d^{(i)}_{1}$, obtain $d^{(i)}_{2},\cdots,d^{(i)}_{\floor{N/2}}$ using forward propagation based on~\eqref{eq:ds}.\label{lineinAlg4}\\
 \resizebox{0.465\textwidth}{!}{ \hspace{-1mm}Send a message from $\textrm{UAV}_{\floor{N/2}}$ to $\textrm{\footnotesize UAV}_{\floor{N/2}+1}$ and measure ${\footnotesize \textrm{SIR}_{\floor{N/2}+1}}$.\label{lineinAlg5}}\\
  \uIf{SIR$_{\floor{N/2}+1}<\Gamma$}{\label{lineinAlg6}
  $d^{(i+1)}_{1}=d^{(i)}_{1}+\epsilon$\\
  $d^{(i+1)}_{N+1}=d^{(i)}_{N+1}+\epsilon$\\
  $i=i+1$ and go to line~\ref{lineinAlg1}.\label{lineinAlg7}
  }\Else{
  Fix the UAVs at their current positions.
  }
  }
  \end{algorithm}
  \vspace{-2mm}
  \section{Numerical Results}
  \noindent Similar to~\cite{chen2018multiple}, we consider $f_c=2\textrm{GHz}$, $C_{_{\textrm{LoS}}}=10^{0.01}$, $C_{_{\textrm{NLoS}}}=10^{2.1}$, and $\eta_{_{\textrm{NLoS}}}=\mu_{_{\textrm{NLoS}}}$. Assuming $p_u=1\textrm{W}$ and $D=35\textrm{m}$, Fig.~\ref{fig:locus} depicts the locus described in Lemma~\ref{th:main} for various parameters. Considering the solid black line and the dotted red line as the references, as expected, increasing $p_t$ (the marked blue and dashed magenta lines) shifts the locus toward the Rx. Considering the dotted red line and the dashed magenta line as the references, bringing the MSI closer to the Tx/Rx (the solid black and marked blue lines) shifts the locus downward, which is equivalent to a decrease in the required UAV altitude. 
  
  Fig.~\ref{fig:rate} compares the $\textrm{SIR}$ of the system obtained using Theorem~\ref{cor:1} to both the random placement, the performance of which is obtained by randomly placing the UAV in $1000$ Monte-Carlo iterations, and the method described in~\cite{chen2018multiple}, which does not capture the existence of the MSI (see Section~\ref{sec:intro}). In this simulation, it is assumed that $Y_{_{MSI}}=X_{_{MSI}}=30\textrm{m},D=35\textrm{m}, h\in[10\textrm{m},50\textrm{m}]$, and the simulation is conducted for two realizations of $p_t$ and $p_u$ described in the figure. As can be seen, the difference between the performance of our approach and the baselines is more prominent in low altitudes (up to $65\%$ increase in $\textrm{SIR}_S$). Also, on average, our method leads to $30.14\%$ and $25.73\%$ increase in $\textrm{SIR}_S$ as compared to the random placement and the method of~\cite{chen2018multiple}, respectively. 
    \begin{figure}[t]
	\minipage{4.2cm}
		\includegraphics[width=1\linewidth, height=.92\linewidth]{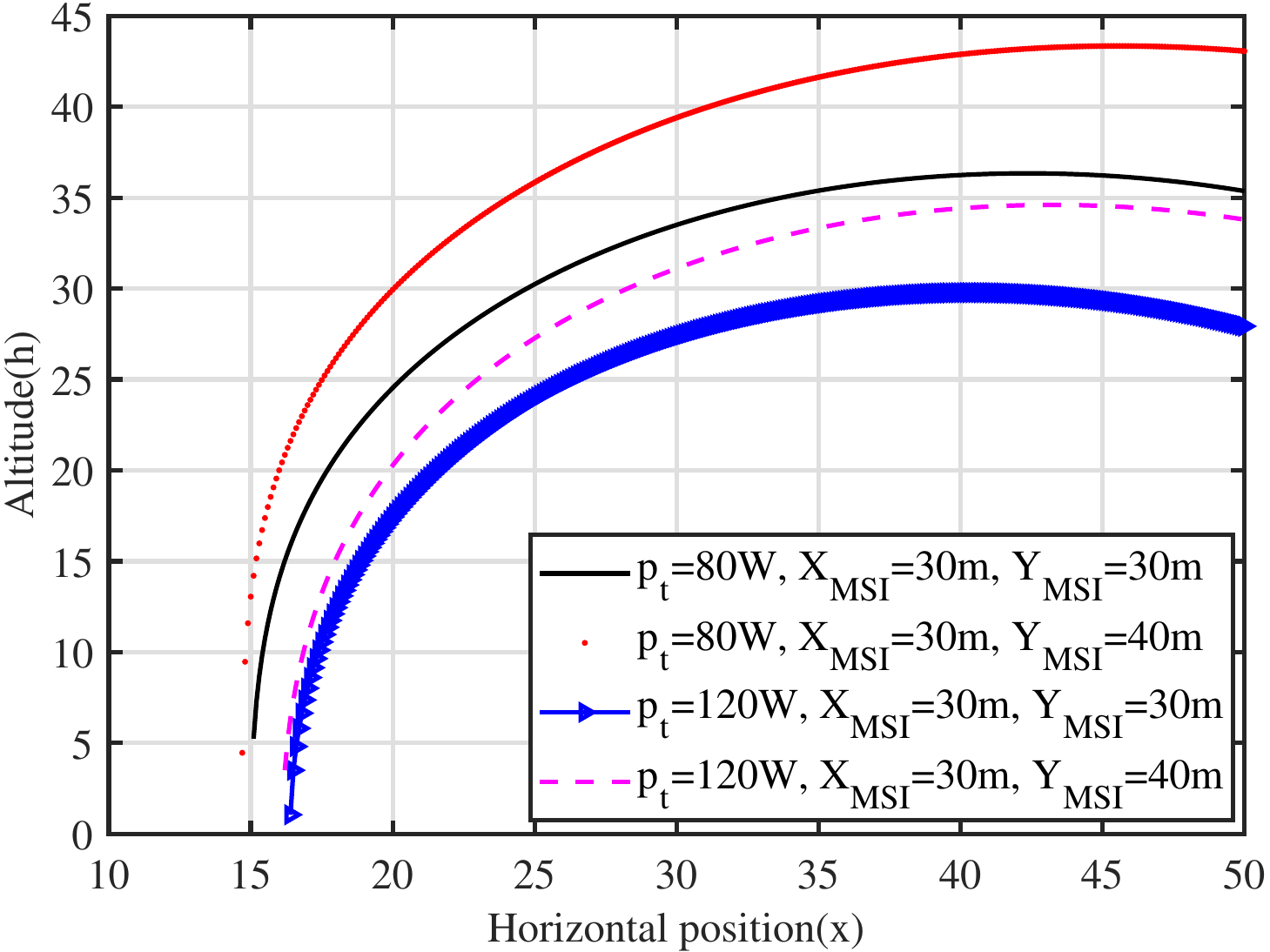}
		\caption{The locus of the points described in Lemma~\ref{th:main} for different parameters. \label{fig:locus}}
		\endminipage
		\quad
	\minipage{4.2cm}
		\includegraphics[width=1.0\linewidth, height=.901\linewidth]{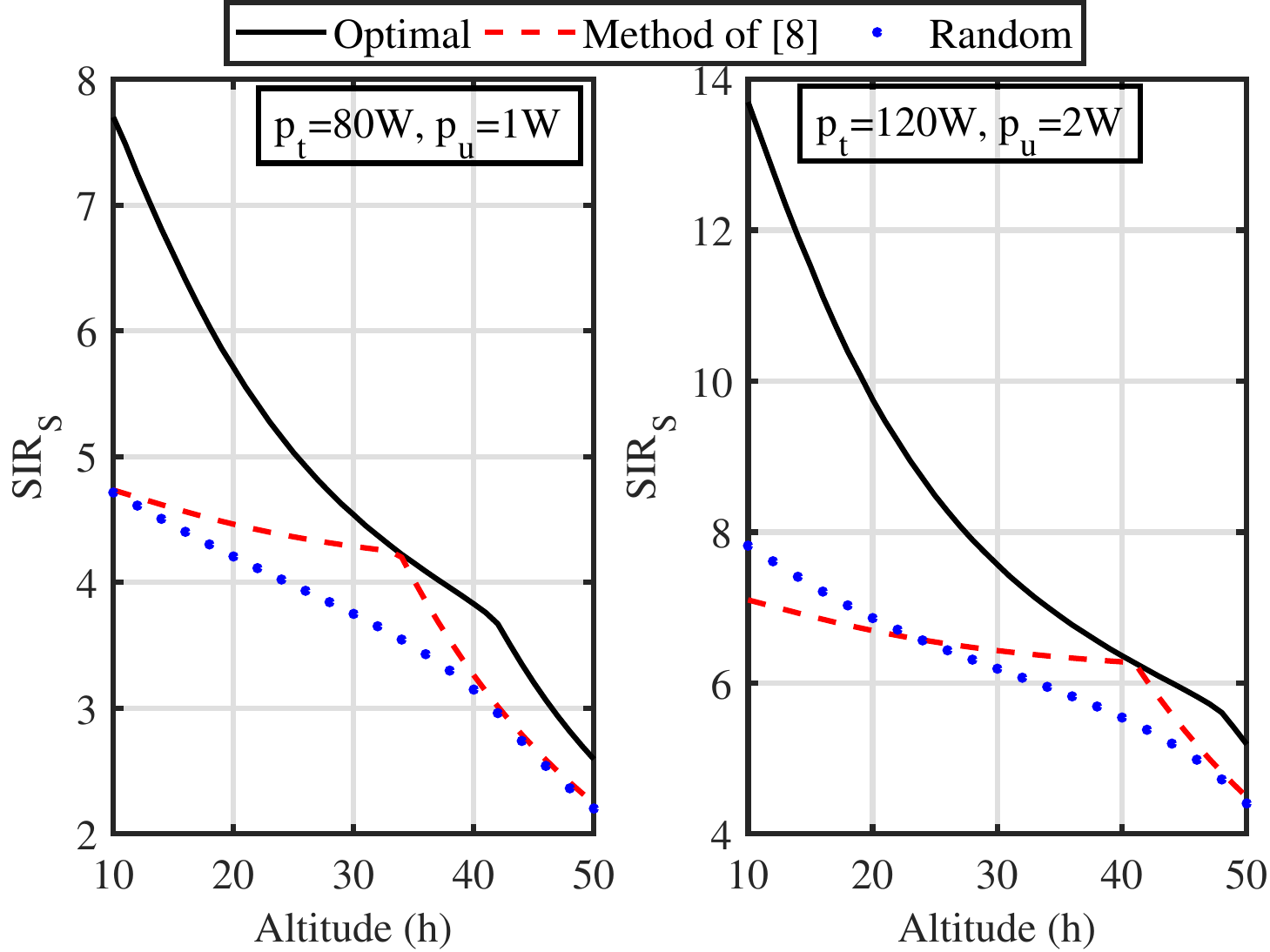}
		\caption{Comparison between the $\textrm{SIR}_S$ obtained using our optimal approach, the method in~\cite{chen2018multiple}, and the random placement for different parameters considering a single UAV.\label{fig:rate}}
		\endminipage

	\minipage{4.2cm}
		\includegraphics[width=1\linewidth, height=.901\linewidth]{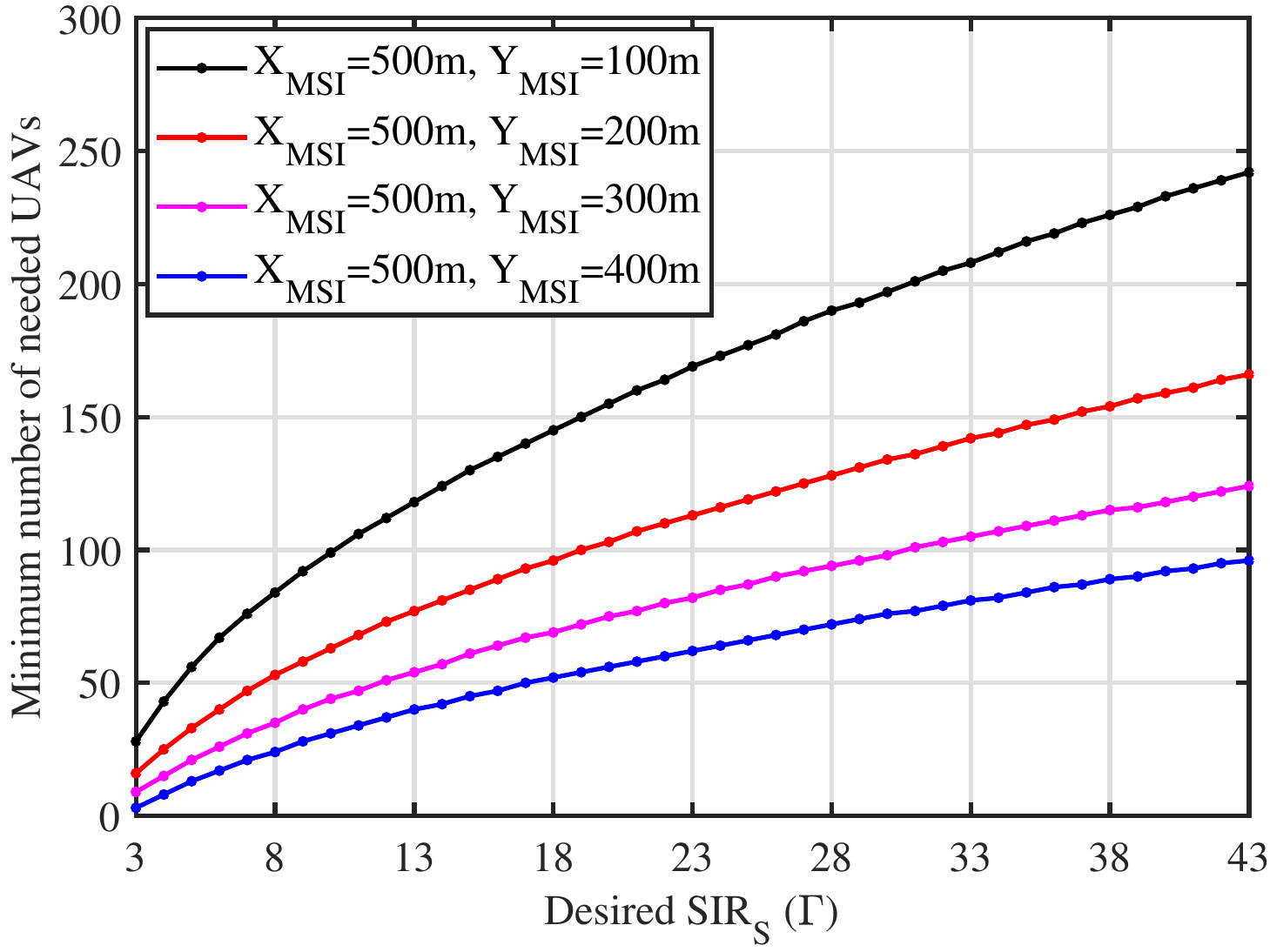}
		\caption{The minimum required number of UAVs to satisfy various values of $\textrm{SIR}_S$ for different positions of the MSI. \label{fig:minpos}}
		\endminipage
		\quad
	\minipage{4.2cm}
		\includegraphics[width=1\linewidth, height=.901\linewidth]{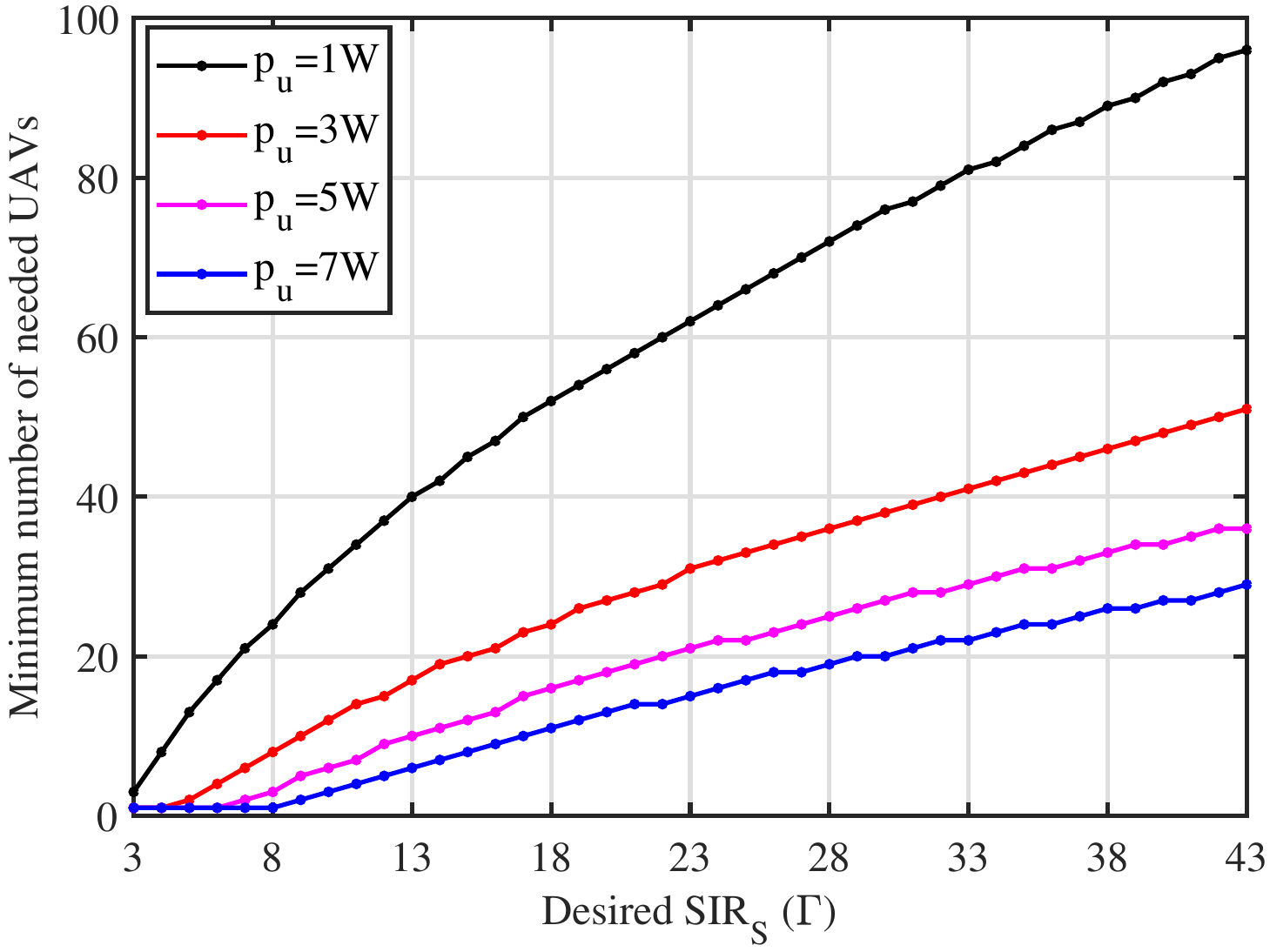}
		\caption{The minimum required number of UAVs to satisfy various values of $\textrm{SIR}_S$ for different UAV transmission powers.\label{fig:minpower}}
		\endminipage
		
	\minipage{4.2cm}
		\includegraphics[width=1\linewidth, height=.901\linewidth]{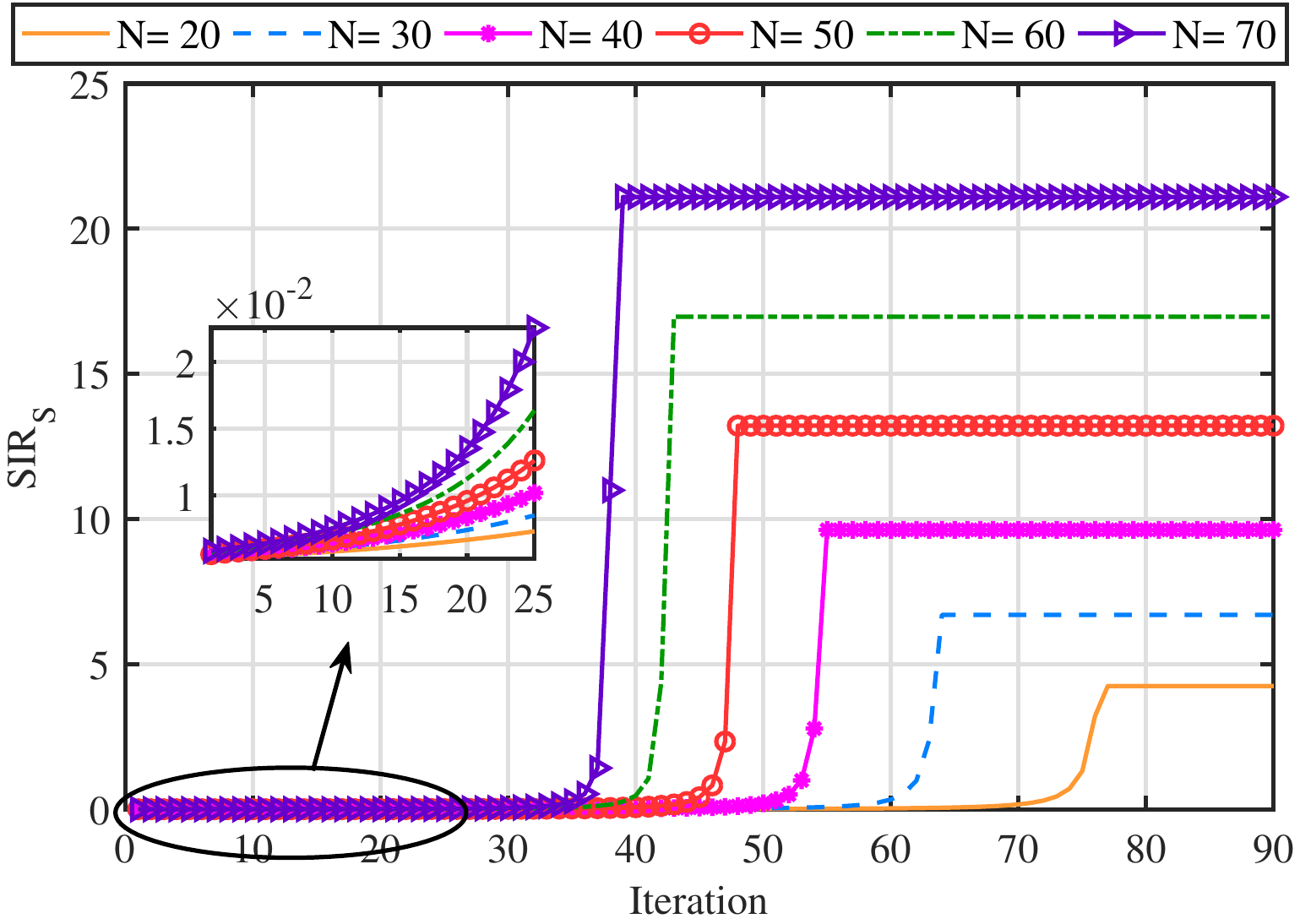}
		\caption{The $\textrm{SIR}_S$ w.r.t the iteration for our proposed distributed algorithm considering various number of UAVs in the system. \label{fig:conv}}
		\endminipage
			\quad
	\minipage{4.2cm}
		\includegraphics[width=1\linewidth, height=0.901\linewidth]{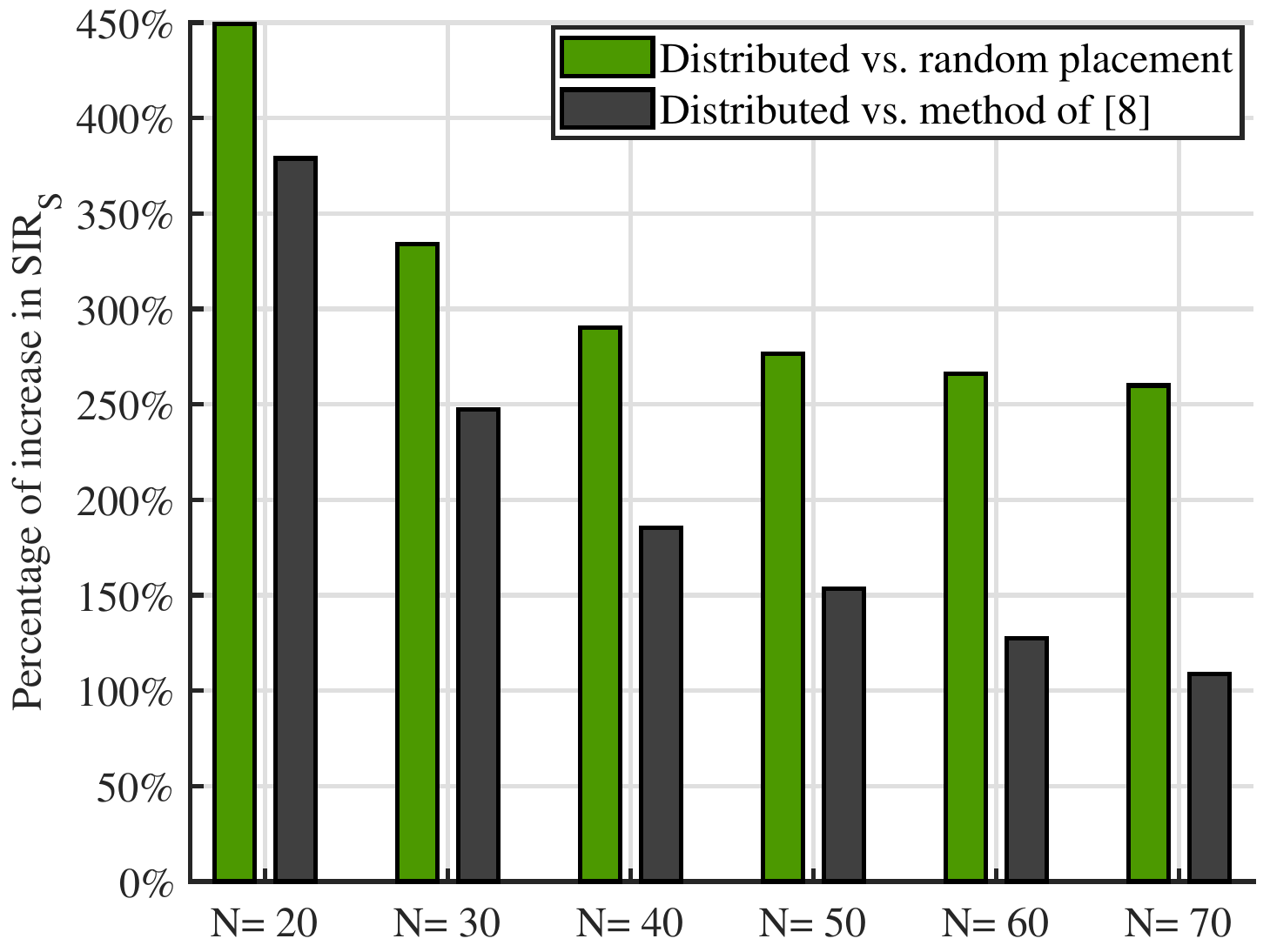}		\caption{Percentage of increase in $\textrm{SIR}_S$ by comparing our distributed algorithm with the random placement and the method in~\cite{chen2018multiple} for various number of UAVs.\label{fig:gainDist}}
		\endminipage
		\vspace{-3.0mm}
	\end{figure}

  Considering $p_t=p_{_{\textrm{MSI}}}=80\textrm{W},D=1000\textrm{m},h=20\textrm{m}$, based on~\eqref{eq:Opt}, the minimum required number of UAVs to satisfy different values of $\textrm{SIR}_S$ when $p_u=1\textrm{W}$ for multiple MSI positions  (when $X_{_{\textrm{MSI}}}=500\textrm{m},Y_{_{\textrm{MSI}}}=400\textrm{m}$ for multiple $p_u$ values) is depicted in Fig.~\ref{fig:minpos} (Fig.~\ref{fig:minpower}). From Fig.~\ref{fig:minpos}, it can be observed that as the MSI gets closer to the Tx/Rx the required number of UAVs increases. Also, from Fig.~\ref{fig:minpower}, it can be seen that by increasing the $p_u$ the required number of UAVs decreases. 
   
  Considering $p_u=1\textrm{W}, p_t=p_{_{\textrm{MSI}}}=80\textrm{W},X_{_{\textrm{MSI}}}=500\textrm{m},Y_{_{\textrm{MSI}}}=400\textrm{m},h=20\textrm{m},D=1000\textrm{m}, d_{min}=4\textrm{m}$, and $\epsilon=3\textrm{m}$ (see Algorithm~\ref{alg:fulldist}), Fig.~\ref{fig:conv} depicts the performance of our distributed algorithm for various number of UAVs in the network. Initially, the UAVs are partitioned into two sets with equal sizes, from which one is placed above the Tx and the other above the Rx. From Fig.~\ref{fig:conv}, it can be seen that as the number of UAVs increases our algorithm achieves larger values of $\textrm{SIR}_S$ with a faster convergence speed. The faster convergence is due to the larger coverage length when having a large number of UAVs. Fig.~\ref{fig:gainDist} reveals the significant increase in the $\textrm{SIR}_S$ obtained through comparing the achieved $\textrm{SIR}_S$ using our distributed algorithm with both the method described in~\cite{chen2018multiple} and the random placement, the performance of which is obtained by randomly placing the UAVs in $1000$ Monte-Carlo iterations.
  \vspace{-2.5mm}
\section{Conclusion}
\noindent
In this work, we studied the UAV-assisted wireless communication paradigm considering the presence of a major source of interference. We proposed a theoretical approach aiming to maximize the SIR of the system considering a single UAV in the network. We extended our study to the Multi-hop scheme, in which utilization of multiple UAVs is feasible. To this end, we first proposed a theoretical approach which simultaneously determines the minimum number of needed UAVs and their optimal positions so as to satisfy a desired SIR of the system. Second,  for a given number of UAVs in the network, we proposed a distributed algorithm requiring message exchange only between the adjacent UAVs so as to maximize the SIR of the system. Furthermore, we illustrated the performance of our methods through numerical simulations.
  \vspace{-2.5mm}
\bibliographystyle{IEEEtran}
\bibliography{ABSbib}
\end{document}